\def\bfE{\mbox{\boldmath$E$}}
\newtheorem{theorem}{Theorem}
\newtheorem{lemma}{Lemma}
\newtheorem{corollary}{Corrollary}
\newtheorem{remark}{Remark}
\title{When MIMO Control Meets MIMO Communication: A Majorization Condition for Networked Stabilizability}
\author{Wei Chen, \emph{Student Member, IEEE}, Songbai Wang, \emph{Student Member, IEEE}, \\and Li Qiu, \emph{Fellow, IEEE}\vspace{-15pt}
\thanks{The authors are with the Department of Electronic and Computer Engineering, The Hong Kong University of Science and Technology,
Clear Water Bay, Kowloon, Hong Kong, China {\tt wchenust@ust.hk, swangas@ust.hk, eeqiu@ust.hk}}
}
\begin{document}

\maketitle
\thispagestyle{empty}
\pagestyle{empty}

%%%%%%%%%%%%%%%%%%%%%%%%%%%%%%%%%%%%%%%%%%%%%%%%%%%%%%%%%%%%%%%%%%%%%%%%%%%%%%%%

\begin{abstract}
In this paper, we initiate the study of networked stabilization via a MIMO communication scheme between the controller and the plant.
Specifically, the communication system is modeled as a MIMO transceiver, which consists of three parts: an encoder, a MIMO channel, and a decoder. In the spirit of MIMO communication, the number of SISO subchannels in the transceiver is often greater than the number of data streams to be transmitted. Moreover, the subchannel capacities are assumed to be fixed a priori. In this case, the encoder/decoder pair gives an additional design freedom on top of the controller, leading to a stabilization problem via coding/control co-design. It turns out that how to take the best advantage of the coding mechanism is quite crucial. From a demand/supply perspective, the design of the coding mechanism boils down to reshaping the demands for communication resource from different control inputs to match the given supplies. We study the problem for the case of AWGN subchannels and fading subchannels, respectively. In both cases, we arrive at a unified necessary and sufficient condition on the capacities of the subchannels under which the coding/control co-design problem is solvable. The condition is given in terms of a majorization type relation. As we go along, systematic procedures are also put forward to implement the coding/control co-design. A numerical example is presented to illustrate our results.
\end{abstract}

\begin{keywords}
Networked stabilization, MIMO communication, coding/control co-design, majorization, topological entropy.
\end{keywords}

%%%%%%%%%%%%%%%%%%%%%%%%%%%%%%%%%%%%%%%%%%%%%%%%%%%%%%%%%%%%%%%%%%%%%%%%%%%%%%%%
\section{Introduction}
The phrase ``MIMO systems'' has different meanings in control theory and in communication theory. As is well known, a MIMO control system refers to a multi-input multi-output physical system interconnected with a multi-input multi-output controller, while a MIMO communication system refers to a communication structure deployed to break the capacity limit of the conventional SISO communication scheme. What will happen when MIMO control meets MIMO communication? In this paper, we attempt to unveil the magic
by investigating
a particular networked stabilization problem, in which MIMO communication is utilized to transmit the control signals.
%of utilizing MIMO communication in MIMO networked control.

Generally speaking, a networked control system (NCS) is a feedback system wherein the feedback loop is closed over a communication network. For a better understanding of the background, we briefly review the state of the art as below. It has been well recognized that in networked control, whether stabilization can be achieved or not critically depends on the information constraints in the communication network. As such, a primary concern of networked stabilization is to find a fundamental limitation on the information constraints so as to render stabilization possible. For a single-input system, the networked stabilization problem has been extensively studied under different information constraints. See \cite{Baillieul,ne,Tatikonda} for data rate constraint, \cite{Braslavsky} for signal-to-noise ratio constraint, \cite{Eliaa,Fu} for quantization, \cite{Eliab} for fading effect, and \cite{Tsumura} for quantization and packet drop, etc. All these studies converge to a unified fundamental limitation on the information constraints required for stabilization given by the topological entropy of the open-loop plant, i.e., the logarithm of the absolute product of unstable poles for a discrete-time plant, or the sum of the unstable poles for a continuous-time plant.

The story gets more complicated when it comes to multi-input systems. In many existing works, e.g., \cite{Fu,Gao,Garone,Vargas}, a mere controller design problem is formulated assuming that the communication network is given a priori.
It turns out that research problems formulated in this way are usually very hard to solve. To mitigate this difficulty, the idea of channel resource allocation is proposed in \cite{Qiu} and followed by several other works such as \cite{Chenb,Feng,Xiaob,Xiaoa}, etc. Specifically, it is assumed therein that the channel capacities can be freely allocated among different input channels subject to a total capacity constraint. This in turn results in a channel/controller co-design problem that is shown to be solvable, if and only if the total channel capacity is greater than the topological entropy of the open-loop plant. Similar ideas, although not stated explicitly, can also be seen in \cite{Li,Shu}, in which the networked stabilization over parallel Gaussian channels are considered. There, only a total transmission power constraint is imposed while the power in each individual channel remains flexible.

Note that the vast majority of existing works assume a SISO communication scheme between the controller and the plant, i.e., each control input is transmitted through a dedicated SISO channel. One strong motivation of this work is drawn from the MIMO technology recently developed in communication theory. It has been widely used in wireless communication where spacial diversity can be exploited to increase the data transmission capacity \cite{TseV}. Inspired by the huge success of MIMO communication, we are eager to explore the potential merits brought about by
utilizing MIMO communication in MIMO control.

As a starting point, we investigate the stabilization of an NCS via MIMO communication. Specifically, we model the communication system between the controller and the plant as a MIMO transceiver, which consists of an encoder, a MIMO channel, and a decoder. One essence of MIMO communication is that the number of SISO subchannels in the transceiver is often greater than the number of data streams. When applied to networked control, this means that the number of subchannels is greater than the number of control inputs. In this paper,
the subchannels are modeled in two different ways. We first consider the AWGN subchannels and then the fading subchannels. In both cases, we assume that the subchannel capacities are fixed a priori and, thus, cannot be allocated as in \cite{Qiu,Chenb,Feng,Xiaob,Xiaoa}. Nevertheless, the encoder/decoder pair in the MIMO transceiver now serves as a substituted design freedom. The main message we convey through this paper is that to stabilize the feedback system, one should not only make the most of the controller but also design the coding mechanism wisely. This results in a stabilization problem via coding/control co-design. Quite surprisingly, a necessary and sufficient condition is obtained for the solvability of the coding/control co-design problem given explicitly in terms of a majorization type relation.

Note that majorization is an old yet new mathematical tool \cite{MOA}. It is old since it has been studied in mathematics by giants like Hardy, Littlewood, and P\'{o}lya in their masterpiece \cite{HLP} and has been widely used in statistics in the past 100 years. It is new since its applications in engineering only appear recently, notably in wireless communication \cite{PJ}, information theory \cite{CV}, operations research \cite{CY}, and power systems \cite{Nayyar}, etc. The application of majorization in control theory remains quite scattered in the literature.
One relevant work can be seen in \cite{Lipsa}, in which majorization is utilized to investigate the remote state estimation with communication costs for a first-order linear time-invariant system.

The rest of this paper is organized as follows.
Section II formulates the problem to be studied. Section III gives some preliminary
knowledge.
Section IV establishes the condition for networked stabilizability via MIMO communication over AWGN subchannels. The result is then extended to the case of fading subchannels in Section V. A numerical example is worked out in Section VI.
Finally, some conclusion remarks follow in Section VII.
Most notations in this paper are more or less standard and will be made clear as we proceed.

\section{Problem Formulation}
Consider the NCS depicted in Fig. \ref{feedback}. Here, the plant $[A|B]$ is a continuous-time linear time-invariant system described by the state space model
\begin{align*}
\dot{x}(t)=Ax(t)+Bu(t),\quad x(0)=x_0,
\end{align*}
where $A\in\mathbb{R}^{n\times n}$ and $B\in\mathbb{R}^{n\times m}$. Assume that $[A|B]$ is unstable but stabilizable. Let the state $x(t)$ be available for feedback. If the communication network between the controller and the plant is ideal, i.e., $u(t)=v(t)$, one can easily design a state feedback controller $v(t)=Fx(t)$ so that the closed-loop system is stable. However, such state feedback design faces challenges when the communication network is not ideal, i.e., $u(t)$ is only a distorted version of $v(t)$. In this case, the achievability of stabilization will depend on the transmission accuracy of the communication network. In fact, a general concern of networked stabilization is to find a fundamental limitation on the quality of the communication network so as to render stabilization possible.

\begin{figure}[htbp]
\begin{center}
\begin{picture}(70,16)
\thicklines
\put(0,5){\framebox(10,10){$F$}} \put(10,10){\vector(1,0){10}}
\put(35,10){\oval(30,10)}
\put(25,7.5){\makebox(20,10){Communication}}
\put(25,2.5){\makebox(20,10){Network}}
\put(50,10){\vector(1,0){10}}
\put(60,5){\framebox(10,10){$[A|B]$}}
\put(65,5){\line(0,-1){5}} \put(65,0){\line(-1,0){60}}
\put(5,0){\vector(0,1){5}} \put(10,10){\makebox(10,5){$v$}}
\put(50,10){\makebox(10,5){$u$}}
\put(35,2){\makebox(0,0){$x$}}
\end{picture}
\end{center}
\vspace{-2pt}
\caption{State feedback via communication network.}\label{feedback}
\end{figure}
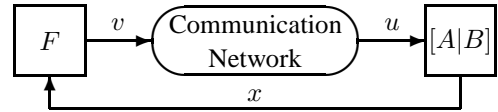

Notice that almost all existing studies, for instance, \cite{Chenb,Fu,Gao,Garone,Li,Qiu,Shu,Vargas,Xiaob,Xiaoa}, assume a SISO communication scheme between the controller and the plant, as shown in Fig. \ref{SISO}. Specifically, the communication network is modeled as a set of parallel independent SISO channels so that each control input is to be transmitted through a dedicated SISO channel. In spite of the multi-variable nature of the transmitted signal $v$ and the received signal $u$, the essence of such communication scheme is still SISO technology.
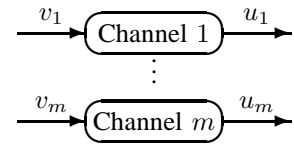
\begin{figure}[htbp]
\begin{center}
\begin{picture}(40,23)
\thicklines
\put(0,5){\vector(1,0){10}}
\put(20,5){\oval(20,6)}
\put(10,0){\makebox(20,10){Channel $m$}}
\put(30,5){\vector(1,0){10}}
\put(30,5){\makebox(10,5){$u_m$}} \put(0,5){\makebox(10,5){$v_m$}}
\put(10,11){\makebox(20,5){$\vdots$}}
\put(0,18){\vector(1,0){10}}
\put(20,18){\oval(20,6)}
\put(10,13){\makebox(20,10){Channel $1$}}
\put(30,18){\vector(1,0){10}}
\put(30,18){\makebox(10,5){$u_1$}} \put(0,18){\makebox(10,5){$v_1$}}
\end{picture}
\caption{A SISO communication scheme.} \label{SISO}
\end{center}
\end{figure}

In this work, we are strongly motivated by the huge MIMO technology recently developed in communication field. A typical MIMO communication system is shown in Fig. \ref{communication}, which is also called a MIMO transceiver. Here the system between $q$ and $p$ is called a MIMO channel characterized by a channel matrix $H$ and an additive white Gaussian noise $d$. The communication engineers are charged to design the transmitter matrix $T$, also referred to as an encoder, and the receiver matrix $R$, also referred to as a decoder, so as to make the received signal $u$ approximate the transmitted signal $v$ as accurately as possible. To fully utilize the advantages of MIMO communication, the transceiver is often built in such a way that the dimensions of $q$ and $p$ are much higher than the dimension of $v$ and $u$. In connection with the NCS as in Fig. \ref{feedback}, we ask out of curiosity the following questions: What will happen if MIMO communication is used in networked control? Does it offer new advantages? Does it lead to new design flexibilities?

\begin{figure}[htbp]
\begin{center}
\begin{picture}(80,15)
\thicklines
\put(0,5){\vector(1,0){10}}
\put(10,0){\framebox(10,10){$T$}}
\put(20,5){\vector(1,0){10}}
\put(30,0){\framebox(10,10){$H$}}
\put(40,5){\vector(1,0){8}}
\put(50,5){\circle{4}}
\put(50,15){\vector(0,-1){8}}
\put(52,5){\vector(1,0){8}}
\put(60,0){\framebox(10,10){$R$}}
\put(70,5){\vector(1,0){10}}
\put(50,5){\makebox(0,0){$+$}}
\put(0,5){\makebox(10,5){$v$}} \put(70,5){\makebox(10,5){$u$}}
\put(20,5){\makebox(10,5){$q$}} \put(52,5){\makebox(8,5){$p$}}
\put(45,7){\makebox(5,8){$d$}}
\end{picture}
\caption{MIMO transceiver, a typical MIMO communication system.} \label{communication}
\end{center}
\vspace{-2pt}
\end{figure}
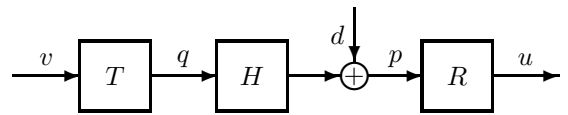

We are also motivated by the following concern. Recall that the channel resource allocation as in \cite{Qiu,Chenb,Xiaob,Xiaoa} is based on a crucial assumption that the channel capacities can be allocated among different input channels subject to a total capacity constraint. What if the individual channel capacities are indeed given a priori and not allocatable?  In that case, is it possible to explore some other design freedoms so as to stabilize the NCSs?

Both motivations lead us to study the following problem. Instead of using a SISO communication scheme, we utilize a MIMO transceiver as in Fig. \ref{communication} to transmit the control signals. For simplicity, we assume that the channel matrix $H$ in the transceiver is identity. In fact, all the developments can be extended straightforwardly to the case of a nonsingular $H$. When $H$ is identity, the MIMO channel in the transceiver becomes a collection of $l$ parallel SISO subchannels. To keep the essence of MIMO technology, we assume that the number of SISO subchannels in the transceiver is greater than or equal to the number of data streams to be transmitted, i.e., $l\geq m$. Later we will see that $l<m$ may also be valid in some cases. This will become more clear as we proceed. The encoder matrix $T\in\mathbb{R}^{l\times m}$ and the decoder matrix $R\in\mathbb{R}^{m\times l}$ are free to be designed subject to certain mild constraint. The specific form of the constraint will depend on the model of the subchannels.
By virtue of the coding mechanism in the MIMO transceiver, each subchannel now transmits a linear combination of the control inputs instead of a single control input as in the conventional SISO communication scheme. In this regard, the subchannels can be considered as helping each other in transmitting the control signals. The current communication system is shown in Fig. \ref{new}.

\begin{figure}[htbp]
\begin{center}
\begin{picture}(90,27)
\thicklines
\put(0,5){\vector(1,0){10}} \put(10,0){\framebox(10,25){$T$}}
\put(20,5){\vector(1,0){10}}
\put(45,5){\oval(30,6)}
\put(34.5,0){\makebox(20,10){Subchannel $l$}}
\put(60,5){\vector(1,0){10}} \put(70,0){\framebox(10,25){$R$}} \put(80,5){\vector(1,0){10}}
\put(80,5){\makebox(10,5){$u_m$}} \put(0,5){\makebox(10,5){$v_m$}}
\put(60,5){\makebox(10,5){$p_l$}} \put(20,5){\makebox(10,5){$q_l$}}
\put(30,11.5){\makebox(20,5){$\vdots$}}
\put(0,20){\vector(1,0){10}} \put(20,20){\vector(1,0){10}}
\put(45,20){\oval(30,6)}
\put(34.5,15){\makebox(20,10){Subchannel $1$}}
\put(60,20){\vector(1,0){10}} \put(80,20){\vector(1,0){10}}
\put(80,20){\makebox(10,5){$u_1$}} \put(0,20){\makebox(10,5){$v_1$}}
\put(60,20){\makebox(10,5){$p_1$}} \put(20,20){\makebox(10,5){$q_1$}}
\put(4.5,15.5){\makebox(0,0){$\vdots$}}
\put(85,15.5){\makebox(0,0){$\vdots$}}
\end{picture}
\caption{A MIMO transceiver as a MIMO communication system in MIMO control.} \label{new}
\end{center}
\vspace{-2pt}
\end{figure}
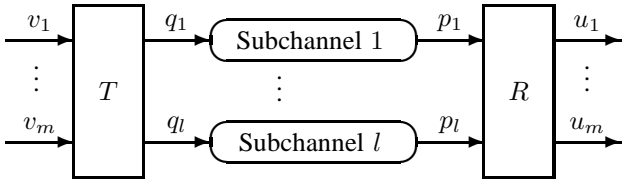

To finish the problem setup, we need to specify the model of the SISO subchannels. As a matter of fact, we consider two different models, each of which addresses a particular type of information constraint imposed on the subchannels.

The first model is concerned with the signal-to-noise ratio constraint. Specifically, all the SISO subchannels in the MIMO transceiver are modeled as AWGN channels as in Fig. \ref{SNR}.
\begin{figure}[htbp]
\begin{center}
\begin{picture}(40,12)
\thicklines
\put(0,0){\vector(1,0){18}}
\put(20,12){\vector(0,-1){10}}
\put(20,0){\circle{4}} \put(22,0){\vector(1,0){18}}
\put(20,0){\makebox(0,0){$+$}}
\put(30,0){\makebox(10,5){$p_i$}}
\put(23,8){\makebox(0,0){$d_i$}}
\put(0,0){\makebox(10,5){$q_i$}}
\end{picture}
\caption{An AWGN subchannel.} \label{SNR}
\end{center}
\vspace{-5pt}
\end{figure}
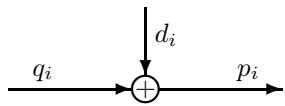
The channel input signal $q_i$ is a stationary process with a certain predetermined admissible power level $P_i$, namely,
\begin{align}
\bfE[q_i^2]<P_i.
\label{powercons}
\end{align}
%Such requirement is either due to the hardware limitations or certain regulatory concern so as to keep the interferences among different subchannels at a low level.
The signal transmission is disturbed by a zero-mean white Gaussian noise $d_i$ with power spectral density $N_i$.
The ratio $\frac{\bfE[q_i^2]}{N_i}$  is referred to as the signal-to-noise ratio of the channel. Therefore, the predetermined admissible transmission power in turn imposes a signal-to-noise ratio constraint on the channel.
By the knowledge of information theory \cite{Cover}, the capacity of such an AWGN channel with input power constraint $P_i$ is given by
\begin{align}
\mathfrak{C}_i=\frac{1}{2}\frac{P_i}{N_i}. \label{c1}
\end{align}
We define the total channel capacity to be the sum of all the SISO subchannel capacities, i.e.,
\begin{align*}
\mathfrak{C}=\mathfrak{C}_1+\mathfrak{C}_2+\dots+\mathfrak{C}_l.
\end{align*}

It is worth stressing that due to the predetermined admissible power levels $P_i$, the subchannel capacities $\mathfrak{C}_i,i=1,2,\dots,l$, are fixed a priori and thus, cannot be allocated among the subchannels as in \cite{Qiu,Chenb,Xiaob,Xiaoa} any more. Nevertheless, here, the encoder/decoder pair provides a substituted design freedom in place of the channel resource allocation. To be specific, the encoder matrix $T$ and the decoder matrix $R$ are to be designed subject to the following constraint:
\begin{align}
RT=I.\label{AWGNcons}
\end{align}
With this additional design freedom, the controller designer is to jointly design the controller and the encoder/decoder pair so as to stabilize the system subject to the input power constraints (\ref{powercons}). This gives rise to a stabilization problem via coding/control co-design. We are interested in finding a fundamental limitation on the subchannel capacities such that the resulting coding/control co-design problem is solvable.

The second subchannel model is concerned with the fading effect in the signal transmission process. Specifically, all the SISO subchannels in the MIMO transceiver are modeled as fading channels as in Fig. \ref{fading}.
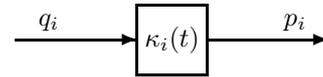
\begin{figure}[htbp]
\begin{center}
\begin{picture}(46,10)
\thicklines
\put(0,5){\vector(1,0){18}}
\put(28,5){\vector(1,0){18}}
\put(18,0){\framebox(10,10){$\kappa_i(t)$}}
\put(36,5){\makebox(10,5){$p_i$}}
\put(0,5){\makebox(10,5){$q_i$}}
\end{picture}
\caption{A fading subchannel.} \label{fading}
\end{center}
\vspace{-5pt}
\end{figure}
Here, the channel input signal $q_i$ is subject to a stochastic multiplicative noise $\kappa_i(t)$ instead of an additive white Gaussian noise, where $\kappa_i(t)$ is a continuous-time white noise process with mean $\mu_i$ and variance $\sigma_i^2$. The mean-square capacity of such a fading channel is defined as
\begin{align}
\mathfrak{C}_i=\frac{1}{2}\frac{\mu_i^2}{\sigma_i^2}.\label{capfading}
\end{align}
The total channel capacity is then given by the sum of all the SISO subchannel capacities, i.e.,
\begin{align*}
\mathfrak{C}=\mathfrak{C}_1+\mathfrak{C}_2+\dots+\mathfrak{C}_l.
\end{align*}
For future use, denote
\begin{align}
M&=\mathrm{diag}\{\mu_1,\mu_2,\dots,\mu_l\},\label{M}\\
\Sigma^2&=\mathrm{diag}\{\sigma_1^2,\sigma_2^2,\dots,\sigma_l^2\}.\label{Sigma2}
\end{align}

We assume that the subchannel capacities $\mathfrak{C}_i,i=1,2,\dots,l$, are given a priori and, thus, not allocatable. As before, the lack of channel resource allocation can be compensated by the substituted design freedom given by the coding mechanism in the MIMO transceiver. A slight difference is that now the encoder/decoder pair is to be designed subject to the following constraint:
\begin{align}
RMT=I,\label{fadingcons}
\end{align}
which simply says that the output signal $p$ has the same mean as the input signal $q$.
Again, the controller designer is to jointly design the controller and the encoder/decoder pair so as to stabilize the system, leading to another stabilization problem via coding/control co-design. We are interested in finding a fundamental limitation on the subchannel capacities such that the resulting coding/control co-design problem is solvable.

Before proceeding, let us recall an important notion called topological entropy. Given a continuous-time linear system $\dot{x}(t)=Ax(t)$, its topological entropy is defined as the quantity
$H(A)\!=\!\sum_{\mathfrak{R}(\lambda_i)>0}\lambda_i$,
where $\lambda_i$ are the eigenvalues of $A$. This quantity has appeared in various studies of optimal control and networked control, for instance, \cite{Baillieul,Braslavsky,Chena,Chenb,Shu,Xiaob}. All these studies point to a very interesting observation: The topological entropy serves as a meaningful instability measure of a linear system based on the minimum resource required to stabilize the system. This observation will be exploited and strengthened in this paper.

\section{Preliminary}
For preparation, some preliminary knowledge is presented in this section, including the cyclic decomposition of a linear system, optimal complementary sensitivity, mean-square norm of a transfer function, as well as some basic concepts and properties in majorization theory.
\vspace{-10pt}
\subsection{Cyclic decomposition}
Let $A$ be an $n\times n$ real matrix. The minimal polynomial of $A$ is the monic polynomial $\alpha(\lambda)$ of least degree such that $\alpha(A)=0$. The minimal polynomial of a matrix is unique. We say that $A$ is cyclic if its minimal polynomial has degree $n$, or equivalently, its minimal polynomial coincides with its characteristic polynomial. In general, given a square matrix, one can always carry out a cyclic decomposition as below.

\begin{lemma}[\cite{Wonham}]
Given a matrix $A\in\mathbb{R}^{n\times n}$ with minimal polynomial $\alpha(\lambda)$, there exists a nonsingular matrix $P$ such that
\begin{align*}
P^{-1}AP=\begin{bmatrix}A_1
    & 0 & \cdots & 0 \\
    0 & A_2 & \ddots & \vdots \\
    \vdots & \ddots & \ddots & 0\\
    0 & \cdots & 0 & A_k\end{bmatrix},
\end{align*}
where $A_i,i=1,2,\dots,k$, are cyclic with minimal polynomials $\alpha_i(\lambda)$, such that $\alpha_1(\lambda)=\alpha(\lambda)$ and $\alpha_{i+1}(\lambda)|\alpha_i(\lambda)$ for $i=1,2,\dots,k-1$.
\label{cycmatrix}
\end{lemma}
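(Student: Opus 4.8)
The plan is to exploit the module structure that the operator $A$ induces on the state space. I would regard $V=\mathbb{R}^n$ as a module over the polynomial ring $\mathbb{R}[\lambda]$, where the scalar $\lambda$ acts on a vector $v$ by $\lambda\cdot v=Av$ and a general polynomial $p(\lambda)$ acts by $p(A)$. Since $V$ is finite dimensional over $\mathbb{R}$, it is finitely generated as an $\mathbb{R}[\lambda]$-module; moreover it is a torsion module, because the minimal polynomial satisfies $\alpha(A)=0$ and hence annihilates every vector. The ring $\mathbb{R}[\lambda]$ is a principal ideal domain, so the classical structure theorem for finitely generated torsion modules over a PID applies.

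By that structure theorem, $V$ admits a direct-sum decomposition into cyclic submodules
\begin{align*}
V=\mathbb{R}[\lambda]/(\alpha_1)\oplus\mathbb{R}[\lambda]/(\alpha_2)\oplus\cdots\oplus\mathbb{R}[\lambda]/(\alpha_k),
\end{align*}
where the monic polynomials $\alpha_i(\lambda)$ are the invariant factors and may be ordered so that $\alpha_{i+1}(\lambda)\,|\,\alpha_i(\lambda)$ for each $i$. Each summand corresponds to an $A$-invariant subspace $V_i\subseteq V$ on which $A$ acts cyclically with minimal polynomial $\alpha_i(\lambda)$; concretely, $V_i$ is generated by a single cyclic vector whose $A$-annihilator is exactly $\alpha_i(\lambda)$.

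To return to the matrix statement, I would pick a basis of $V$ obtained by concatenating bases of the invariant subspaces $V_1,\dots,V_k$ and let $P$ be the matrix whose columns are these basis vectors. Because each $V_i$ is $A$-invariant, $P^{-1}AP$ is block diagonal with blocks $A_i$ representing $A|_{V_i}$, and each $A_i$ is cyclic with minimal polynomial $\alpha_i(\lambda)$. Finally, since the minimal polynomial of a block-diagonal matrix is the least common multiple of the minimal polynomials of its blocks, and the divisibility chain $\alpha_k\,|\cdots|\,\alpha_1$ forces that least common multiple to equal $\alpha_1$, we obtain $\alpha_1(\lambda)=\alpha(\lambda)$, as required.

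The main obstacle is the structure theorem itself, that is, the existence of the cyclic decomposition together with the divisibility ordering of the invariant factors. If one wishes a self-contained derivation rather than a citation to \cite{Wonham}, the standard route proceeds by induction on $\dim V$: one first extracts a cyclic vector whose annihilator equals the full minimal polynomial $\alpha$, producing the top summand $V_1$, and then shows that this cyclic subspace admits an $A$-invariant complement. Proving the existence of such a complement is the delicate step; it rests on a lifting argument for the chosen cyclic vector (the notion of an admissible, or conductor-closed, subspace) and is where essentially all the work concentrates. Once the complement is secured, the inductive hypothesis supplies the remaining summands with the correct divisibility relations.
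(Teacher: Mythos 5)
Your proposal is correct. Note that the paper itself offers no proof of this lemma: it is imported verbatim from Wonham's book \cite{Wonham}, so there is no in-paper argument to compare against. Your route---viewing $\mathbb{R}^n$ as a finitely generated torsion module over the PID $\mathbb{R}[\lambda]$ and invoking the invariant-factor form of the structure theorem---is the standard proof of this rational-canonical-structure result, and it is essentially the argument underlying the cited reference. The details you give are sound: each summand $\mathbb{R}[\lambda]/(\alpha_i)$ is an $A$-invariant subspace of dimension $\deg\alpha_i$ on which $A$ acts as the companion matrix of $\alpha_i$, so each block is cyclic in the paper's sense (minimal polynomial of degree equal to the block size); the divisibility chain $\alpha_k \mid \cdots \mid \alpha_1$ matches the lemma's ordering convention; and the identification $\alpha_1 = \alpha$ follows, as you say, because the minimal polynomial of a block-diagonal matrix is the least common multiple of the blocks' minimal polynomials. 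You are also right about where the real work lies: if one refuses to cite the structure theorem, the delicate step is producing an $A$-invariant complement to a maximal cyclic subspace (the admissible-subspace lifting argument in the Hoffman--Kunze style induction), and your sketch of that reduction is accurate.
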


\begin{remark}
\label{remarkmatcyc}
Note that in Lemma \ref{cycmatrix}, the number $k$, referred to as the cyclic index, is unique. The minimal polynomials $\alpha_i(\lambda),i=1,2,\dots,k$, are also unique. In addition, from the relation $\alpha_{i+1}(\lambda)|\alpha_i(\lambda)$, it follows  that the spectrum of $A_{i+1}$ is contained in the spectrum of $A_i$. Consequently, there holds $H(A_1)\geq H(A_2)\geq \cdots\geq H(A_k)$.
\end{remark}

The above cyclic decomposition of a matrix further leads to the cyclic decomposition of a linear system which is quite useful in control theory. See the following lemma.

\begin{lemma}[\cite{Wonham}]
\label{lemmasyscyc}
Given a stabilizable linear system $[A|B]$ with $A\in\mathbb{R}^{n\times n}$ and $B\in\mathbb{R}^{n\times m}$, there exist nonsingular matrices $P$ and $Q$ such that
\begin{multline}
[P^{-1}AP|P^{-1}BQ]\\=\left[\!\left.\begin{bmatrix}A_1
    & 0 & \cdots & 0 \\
    0 & A_2 & \ddots & \vdots \\
    \vdots & \ddots & \ddots & 0\\
    0 & \cdots & 0 & A_k\end{bmatrix}\right|
    \!\begin{bmatrix}b_1 & \ast & \cdots & \ast &\ast\\
    0 & b_2 & \ddots & \vdots &\vdots\\
    \vdots & \ddots & \ddots & \ast&\ast\\
    0 & \cdots & 0 & b_k&\ast\end{bmatrix}\!\right],\label{cyclicdec}
\end{multline}
where $A$ is transformed to its cyclic decomposition form and the subsystems $[A_i|b_i],i=1,2,\dots,k$, are stabilizable.
\end{lemma}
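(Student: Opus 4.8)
The plan is to prove the statement by induction on the state dimension $n$, peeling off one cyclic block together with one input direction at each stage; the matrices $P$ and $Q$ are then built jointly rather than by first fixing the decomposition of Lemma~\ref{cycmatrix} and only afterwards acting on $B$. The engine of the induction is a single-block extraction. First I would choose an input combination $b_1=Bq_1$ whose $A$-cyclic subspace $V_1=\mathrm{span}\{b_1,Ab_1,A^2b_1,\dots\}$ has the largest dimension achievable from $\mathrm{Im}\,B$; on $V_1$ the restriction $A_1=A|_{V_1}$ is cyclic with $b_1$ a cyclic generator, so $[A_1\,|\,b_1]$ is controllable, hence stabilizable. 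The existence of such a maximal-order generator, together with an $A$-invariant complement $W$ giving $V=V_1\oplus W$, is exactly the mechanism underlying the rational canonical form invoked in Lemma~\ref{cycmatrix}. Choosing a basis adapted to $V_1\oplus W$ makes $A$ block diagonal, and since $b_1\in V_1$ the first column of the transformed $B$ automatically has zero components in every block below the first --- this is what produces the staircase of zeros in (\ref{cyclicdec}).

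For the recursion I would pass to the quotient. Let $\pi_W$ be the projection onto $W$ along $V_1$ and set $B'=\pi_W B$, so that $[A'\,|\,B']$ with $A'=A|_W$ is a pair on a space of strictly smaller dimension. Stabilizability is inherited: if $\bar w^{*}$ were a left eigenvector of the induced map on $V/V_1\cong W$ with eigenvalue $\lambda$ in the closed right half plane and $\bar w^{*}B'=0$, then lifting it through the quotient map (using $\pi_W A=A'\pi_W$) yields a left eigenvector $w^{*}$ of $A$ with $w^{*}A=\lambda w^{*}$ and $w^{*}B=0$, contradicting the stabilizability of $[A\,|\,B]$. Applying the inductive hypothesis to $[A'\,|\,B']$ gives blocks $A_2,\dots,A_k$ with generators $b_2,\dots,b_k$ and input directions $q_2,\dots,q_k$ in staircase form on $W$. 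Lifting each $q_j$ back to the full system, its $W$-part is unchanged (so it keeps the zeros below block $j$) while its $V_1$-part becomes exactly the arbitrary entries marked $\ast$ in the top block row of (\ref{cyclicdec}); collecting $q_1,q_2,\dots,q_k$ and completing to a basis of $\mathbb{R}^{m}$ yields the nonsingular $Q$, and the composition of the coordinate changes yields $P$. The divisibility chain $\alpha_{i+1}(\lambda)\,|\,\alpha_i(\lambda)$ and the spectrum nesting recorded in Remark~\ref{remarkmatcyc} follow automatically, since each $A_i$ is the maximal-order cyclic block of the $i$-th quotient and the minimal polynomial of an invariant restriction divides that of the whole.

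The step I expect to be the main obstacle is reconciling the number of cyclic blocks $k$ with the number of available inputs $m$, which is precisely where mere stabilizability (rather than controllability) must be handled with care. For a controllable pair, the Popov--Belevitch--Hautus test forces the geometric multiplicity of every eigenvalue, hence the cyclic index $k$, to be at most $\mathrm{rank}\,B\le m$, so one genuinely obtains $k\le m$ diagonal generators and the staircase fits in $m$ columns. Under mere stabilizability this bound can fail at stable eigenvalues: a stable uncontrollable mode may inflate the minimal polynomial with a factor that no $b_1\in\mathrm{Im}\,B$ can generate, so the maximal-order extraction above must be run using the PBH condition only at the unstable, closed-right-half-plane eigenvalues. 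This guarantees that the \emph{relevant} geometric multiplicities are bounded by $m$, while the purely stable uncontrollable part is stabilizable for free (indeed with $b_i$ possibly zero) and can be absorbed without consuming a dedicated input. Making this splitting precise --- so that every extracted subsystem $[A_i\,|\,b_i]$ is stabilizable and the construction terminates with a genuinely nonsingular $Q$ --- is the technical heart of the argument; once it is in place, the induction closes and (\ref{cyclicdec}) follows.
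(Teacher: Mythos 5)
The paper does not prove this lemma at all --- it is quoted from \cite{Wonham} without argument --- so your proposal has to stand on its own. Its skeleton is the standard induction for \emph{controllable} pairs, and in that setting it is essentially sound (indeed, since the paper's own theorem proofs assume every eigenvalue of $A$ lies in the open right half plane, controllability is all the paper ever actually uses). The genuine gap sits exactly at the step you postpone as ``the technical heart'': mere stabilizability, which is what the lemma asserts, and there the construction fails, not for fussy reasons but structurally. Your extraction step needs the cyclic subspace $V_1$ generated by $b_1=Bq_1$ to admit an $A$-invariant complement. That is guaranteed when the minimal polynomial of $b_1$ equals the minimal polynomial $\alpha$ of $A$ --- \emph{this}, not ``largest dimension achievable from $\mathrm{Im}\,B$'', is what the rational canonical form machinery requires --- and such a generator exists inside $\mathrm{Im}\,B$ precisely because controllability prevents $\mathrm{Im}\,B$ from lying in $\ker\beta(A)$ for any maximal proper divisor $\beta$ of $\alpha$. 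Under stabilizability alone both halves collapse. Concretely, for $A=\left[\begin{smallmatrix}-1&1\\0&-1\end{smallmatrix}\right]$, $B=\left[\begin{smallmatrix}1\\0\end{smallmatrix}\right]$ (stabilizable, since $A$ is stable), your $V_1=\mathrm{span}\{e_1\}$ is the \emph{unique} one-dimensional $A$-invariant subspace, so no invariant complement exists and the induction never starts; the decomposition the lemma asserts is instead the single cyclic block $A_1=A$ with $b_1=B$, a block that is stabilizable but \emph{not} controllable. The lesson: the cyclic blocks must be allowed to swallow uncontrollable stable dynamics, which your greedy ``controllable directions first'' extraction forbids by design.

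The second failure is that even where complements happen to exist, your recursion does not output ``the cyclic decomposition form''. For $A=\mathrm{diag}(1,-1)$, $B=\left[\begin{smallmatrix}1\\0\end{smallmatrix}\right]$, your procedure returns $A_1=[1]$ and then the leftover stable block $A_2=[-1]$ with $b_2=0$: now $\alpha_1=\lambda-1\neq\alpha=(\lambda-1)(\lambda+1)$, the chain fails because $(\lambda+1)\nmid(\lambda-1)$, and $k=2>m=1$, so the staircase does not even fit into the columns of $P^{-1}BQ$; the lemma instead requires the single cyclic block $A_1=A$ (note $A$ \emph{is} cyclic) with $b_1=B$. So your closing claims --- that the divisibility chain ``follows automatically'' and that the stable uncontrollable part can be ``absorbed \dots with $b_i$ possibly zero'' --- are not loose ends; they are exactly the false steps. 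The repair must merge rather than append: split the state space as $V^+\oplus V^-$ into antistable and stable modal subspaces, run your argument on $V^+$ (where stabilizability \emph{is} controllability), take any rational canonical decomposition of $A|_{V^-}$, and fuse blocks pairwise, $V_i=V_i^+\oplus V_i^-$, which is cyclic with minimal polynomial $\alpha_i^+\alpha_i^-$ by coprimeness. Even this is not routine, because the stable components of the input columns must still be pushed into the staircase; in fact, for the stabilizable pair with $A=\mathrm{diag}\bigl(1,\left[\begin{smallmatrix}-1&1\\0&-1\end{smallmatrix}\right],-1\bigr)$ and $B=\begin{bmatrix}e_1+e_4&0\end{bmatrix}$, no cyclic $A$-invariant subspace with minimal polynomial $(\lambda-1)(\lambda+1)^2$ contains $e_1+e_4$, so the leading block of the canonical chain can never be paired with a stabilizing input column and the form is only achievable after reordering the blocks. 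Until these points are resolved, the proposal does not prove the lemma as stated.
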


\begin{remark}
\label{remarksyscyc}
The subsystems $[A_i|b_i],i=1,2,\dots,k$, are hereinafter referred to as the cyclic subsystems of the system $[A|B]$.
The role of nonsingular matrices $P$ and $Q$ can be considered as linear transformations in the state space and input space, respectively. The following implication can be inferred from Lemma \ref{lemmasyscyc}.
In the cyclic decomposition (\ref{cyclicdec}), $A_1$ contains the greatest number of unstable eigenvalues of $A$ that can be stabilized by a single control input up to linear transformations in the input space; likewise, $A_1$ together with $A_2$ contains the greatest number of unstable eigenvalues of $A$ that can be stabilized by two control inputs up to linear transformations in the input space; and so on so forth.
\end{remark}

\subsection{Optimal complementary sensitivity}
Consider the feedback system with MIMO communication. Assume temporarily that the SISO subchannels in the MIMO transceiver are ideal and the encoder/decoder pair is simply trivial, i.e., $l=m$ and $T=R=I$. Then, the complementary sensitivity function at the plant input is given by
\begin{align*}
\bm{T}(s)=F(sI-A-BF)^{-1}B.
\end{align*}
As shown in many existing works \cite{Braslavsky,Eliab,Shu,Xiaob} as well as later developments in this work, the feedback stabilization in the presence of either additive noise or multiplicative noise is closely related to the $\mathcal{H}_2$ optimal $\bm{T}(s)$.
For preparation, the following lemma gives a solution to $\mathcal{H}_2$ optimal $\bm{T}(s)$.
\begin{lemma}[\cite{Chena}]
\label{ocs}
There holds
\begin{align*}
\inf_{F:A+BF \text{ is stable}}\|\bm{T}(s)\|_2=[2H(A)]^{\frac{1}{2}}.
\end{align*}
Moreover, when $A$ has no eigenvalues on the imaginary axis, the infimum can be achieved by the optimal state feedback gain $F=-B'X$, where $X$ is the stabilizing solution to the algebraic Riccati equation
\begin{align*}
A'X+XA-XBB'X=0.
\end{align*}
\end{lemma}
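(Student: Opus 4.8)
The plan is to show that the $\mathcal{H}_2$ norm of the complementary sensitivity $\bm{T}(s)=F(sI-A-BF)^{-1}B$ is bounded below by $[2H(A)]^{1/2}$ over all stabilizing $F$, and that this bound is tight and achieved by the LQR-type gain $F=-B'X$. I would organize the argument around the algebraic Riccati equation and a completion-of-squares identity. First I would reduce to the case where $A$ has no eigenvalues on the imaginary axis, since the infimum is characterized in that regime; the general bound $[2H(A)]^{1/2}$ depends only on the open right-half-plane eigenvalues through $H(A)=\sum_{\mathfrak{R}(\lambda_i)>0}\lambda_i$, so the stable and imaginary-axis parts contribute nothing to the entropy.

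The key computational step is to relate $\|\bm{T}(s)\|_2^2$ to the solution $X$ of the Riccati equation $A'X+XA-XBB'X=0$. With the closed-loop matrix $A_F=A+BF$ and $F=-B'X$, I would use the controllability/observability Gramian interpretation of the $\mathcal{H}_2$ norm:
\begin{align*}
\|\bm{T}(s)\|_2^2=\mathrm{tr}\!\left(B'\!\int_0^\infty e^{A_F't}F'F\,e^{A_Ft}\,dt\,B\right).
\end{align*}
Substituting $F'F=XBB'X$ and invoking the Riccati equation to rewrite $XBB'X=A'X+XA$, the integrand telescopes: the quantity $A_F'X+XA_F$ collapses via a Lyapunov-type identity so that the integral evaluates to a boundary term involving $X$. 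I would then show that this boundary term reduces to $2\,\mathrm{tr}(X \cdot(\text{something}))$ that equals $2H(A)$, using that the stabilizing solution $X$ picks out precisely the antistable (unstable) dynamics and that $\mathrm{tr}$ of the relevant product counts the unstable eigenvalues with their real parts.

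For the lower bound over arbitrary stabilizing $F$, I would use the standard completion-of-squares argument: for any stabilizing $F$, write $F=-B'X+(F+B'X)$ and expand $\|\bm{T}(s)\|_2^2$, showing the cross terms vanish by the Riccati equation and the residual term $\|(F+B'X)(sI-A_F)^{-1}B\|_2^2$ is nonnegative, hence the optimum is attained exactly at $F=-B'X$ with value $2H(A)$. The main obstacle I anticipate is the careful bookkeeping in the telescoping/boundary-term evaluation — specifically, verifying that the boundary contribution at infinity vanishes (guaranteed by stability of $A_F$) while the contribution at $t=0$ yields exactly $2H(A)$ rather than some other spectral quantity. This requires that $X$ be the \emph{stabilizing} (maximal) solution so that its trace against the closed-loop flow isolates the sum of unstable eigenvalues; establishing that $\mathrm{tr}$ of the resulting expression equals $\sum_{\mathfrak{R}(\lambda_i)>0}\lambda_i$ is the crux, and I would handle it by passing to the real Jordan form of $A$ or by a spectral argument on the antistable invariant subspace. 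Since the result is cited from \cite{Chena}, I expect the paper to invoke it directly rather than reprove it, but the above is the route I would take to verify it independently.
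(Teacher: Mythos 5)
First, a point of reference: the paper never proves this lemma --- it is invoked purely by citation to the thesis \cite{Chena} --- so there is no in-paper argument to compare against, and your proposal has to stand on its own. Its skeleton is the standard Riccati/completion-of-squares route and is essentially sound. The cleanest packaging of what you describe is at the Gramian level: with $A_F = A+BF$ and $Q$ the observability Gramian solving $A_F'Q+QA_F+F'F=0$, the Riccati equation gives the algebraic identity $A_F'X+XA_F+F'F=(F+B'X)'(F+B'X)$, hence
\begin{align*}
\|\bm{T}(s)\|_2^2 &= \mathrm{tr}(B'QB)\\
&= \mathrm{tr}(B'XB)+\big\|(F+B'X)(sI-A_F)^{-1}B\big\|_2^2,
\end{align*}
which delivers the lower bound for \emph{every} stabilizing $F$ and its attainment at $F=-B'X$ in one stroke. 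One caution on phrasing: the cross terms do not ``vanish'' --- if you literally expand $\|\bm{T}\|_2^2$ in the frequency-domain inner product, the cross terms are nonzero and are \emph{absorbed}, together with the square term $\|B'X(sI-A_F)^{-1}B\|_2^2$, into the constant $\mathrm{tr}(B'XB)$ by the Riccati equation; the Gramian identity above is the correct bookkeeping.

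The genuine gap is the step you yourself flag as the crux and then defer to ``real Jordan form or a spectral argument'': the identity $\mathrm{tr}(B'XB)=2H(A)$. Without it you have only shown that the optimal value is $\mathrm{tr}(B'XB)$, not that it equals the topological entropy, and that is the entire content of the lemma. Here is how to close it along the spectral route you hint at. Decompose $A=\mathrm{diag}(A_-,A_+)$ with $A_-$ stable and $A_+$ antistable; stabilizability of $[A|B]$ makes $[A_+|B_+]$ controllable, and the stabilizing Riccati solution is $X=\mathrm{diag}(0,X_+)$ with $X_+>0$, so $\mathrm{tr}(B'XB)=\mathrm{tr}(B_+'X_+B_+)$. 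Pre- and post-multiplying the antistable-block Riccati equation $A_+'X_++X_+A_+=X_+B_+B_+'X_+$ by $X_+^{-1}$ shows that $W=X_+^{-1}$ satisfies the Lyapunov equation $A_+W+WA_+'=B_+B_+'$, whence
\begin{align*}
\mathrm{tr}(B_+'X_+B_+) &= \mathrm{tr}\big(X_+(A_+W+WA_+')\big)\\
&= \mathrm{tr}(A_+)+\mathrm{tr}(A_+') = 2H(A),
\end{align*}
since the eigenvalues of $A_+$ are exactly the unstable eigenvalues of $A$ and their sum is real. Finally, your opening reduction is stated too casually: when $A$ has imaginary-axis eigenvalues the stabilizing Riccati solution does not exist and the infimum is generally \emph{not attained} (for $A=0$, $B=1$ one gets $\|\bm{T}\|_2^2=|f|/2\rightarrow 0$), so the first claim of the lemma requires a separate limiting argument rather than a reduction to the hyperbolic case; this mirrors the assumption the paper itself makes, and then removes by citation, in the proofs of its main theorems.
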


\subsection{Mean-square norm}
Consider a stable transfer function $\bm{G}(s)$ with dimension $m\times m$. Its mean-square norm is defined to be
\begin{align*}
\|\bm{G}\|_{\mathrm{MS}}=\sqrt{\rho([\|\bm{G}_{ij}\|_2^2])},
\end{align*}
where $\rho(\cdot)$ denotes the spectral radius and $\|\bm{G}_{ij}\|_2$ is the $\mathcal{H}_2$ norm of the $(i,j)$th entry of $\bm{G}(s)$.

By convention, the matrix $1$-norm and $\infty$-norm for $Z\in\mathbb{C}^{p\times m}$ are defined as
\begin{align*}
\|Z\|_1=\max_{1\leq j\leq m}\sum_{i=1}^p |Z_{ij}|,\\
\|Z\|_{\infty}=\max_{1\leq i\leq p}\sum_{j=1}^m |Z_{ij}|,
\end{align*}
respectively. Based on this, we define two mixed norms for $\bm{G}(s)$ as follows:
\begin{align*}
\|\bm{G}\|_{2,1}=\left(\max_{1\leq j\leq m}\sum_{i=1}^m \|\bm{G}_{ij}\|_2^2\right)^{\frac{1}{2}},\\
\|\bm{G}\|_{2,\infty}=\left(\max_{1\leq i\leq m}\sum_{j=1}^m \|\bm{G}_{ij}\|_2^2\right)^{\frac{1}{2}},
\end{align*}
where the subscript $2$ in the mixed norms stands for $\mathcal{H}_2$ norm.

A useful lemma is presented below, which characterizes the connection between the mean-square norm and the two mixed norms. The proof is simply a specialization of Theorem 2 in \cite{Stoer} to the matrix $[\|\bm{G}_{ij}\|_2^2]$ and is thus omitted here for brevity.

\begin{lemma}
There holds
\begin{align*}
\|\bm{G}\|_{\mathrm{MS}}&=\inf_{D\in\mathcal{D}}\|D^{-1}\bm{G}D\|_{2,1}\\
&=\inf_{D\in\mathcal{D}}\|D^{-1}\bm{G}D\|_{2,\infty},
\end{align*}
where $\mathcal{D}$ is the set of all $m\times m$ diagonal matrices with positive diagonal entries.
\end{lemma}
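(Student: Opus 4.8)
The plan is to translate the entire statement into a question about a single nonnegative matrix and then invoke the Perron--Frobenius characterization of the spectral radius. First I would introduce the $m\times m$ nonnegative matrix $W=[\|\bm{G}_{ij}\|_2^2]$, so that by definition $\|\bm{G}\|_{\mathrm{MS}}^2=\rho(W)$. Conjugating by a positive diagonal $D=\mathrm{diag}(d_1,\dots,d_m)\in\mathcal{D}$ scales entries entrywise, $(D^{-1}\bm{G}D)_{ij}=d_i^{-1}d_j\,\bm{G}_{ij}$, hence $\|(D^{-1}\bm{G}D)_{ij}\|_2^2=d_i^{-2}d_j^2\,W_{ij}$. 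Therefore the matrix of squared $\mathcal{H}_2$ norms of $D^{-1}\bm{G}D$ is exactly $E^{-1}WE$ with $E=D^2\in\mathcal{D}$, and as $D$ ranges over $\mathcal{D}$ so does $E$. Squaring the two mixed norms, I would observe that $\|D^{-1}\bm{G}D\|_{2,1}^2=\|E^{-1}WE\|_1$ (the induced matrix $1$-norm, i.e.\ the largest column sum) and $\|D^{-1}\bm{G}D\|_{2,\infty}^2=\|E^{-1}WE\|_{\infty}$ (the largest row sum). Thus the lemma reduces to showing
\[
\rho(W)=\inf_{E\in\mathcal{D}}\|E^{-1}WE\|_{\infty}=\inf_{E\in\mathcal{D}}\|E^{-1}WE\|_{1}.
\]

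For this reduced identity I would argue both inequalities. The lower bound is immediate: any induced operator norm dominates the spectral radius, and a similarity transformation leaves the spectrum unchanged, so $\rho(W)=\rho(E^{-1}WE)\le\|E^{-1}WE\|_{\infty}$ for every $E\in\mathcal{D}$, and likewise for the $1$-norm. For the matching upper bound I would first treat the case where $W$ is irreducible: by Perron--Frobenius there is a strictly positive eigenvector $w$ with $Ww=\rho(W)w$; taking $E=\mathrm{diag}(w_1,\dots,w_m)$, the $i$th row sum of $E^{-1}WE$ equals $w_i^{-1}(Ww)_i=\rho(W)$, so $\|E^{-1}WE\|_{\infty}=\rho(W)$ exactly and the infimum is attained. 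The $1$-norm statement then follows by transposition, since $\|M\|_1=\|M'\|_{\infty}$, $\rho(M)=\rho(M')$, and $(E^{-1}WE)'=E\,W'E^{-1}$ is again a diagonal similarity of the nonnegative matrix $W'$.

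The one genuine obstacle is that $W$ need not be irreducible. I would handle this by a perturbation/continuity argument: for $\varepsilon>0$ set $W_\varepsilon=W+\varepsilon J$, where $J$ is the all-ones matrix, so that $W_\varepsilon$ has all entries strictly positive and is therefore irreducible. The previous paragraph supplies a positive diagonal $E_\varepsilon$ with $\|E_\varepsilon^{-1}W_\varepsilon E_\varepsilon\|_{\infty}=\rho(W_\varepsilon)$, whence $\|E_\varepsilon^{-1}WE_\varepsilon\|_{\infty}\le\rho(W_\varepsilon)$ because $0\le W\le W_\varepsilon$ entrywise and the largest row sum is monotone in the entries. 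Since the spectral radius is continuous and $\rho(W_\varepsilon)\to\rho(W)$ as $\varepsilon\downarrow 0$, the infimum over $\mathcal{D}$ is at most $\rho(W)$, which together with the lower bound forces equality; the $1$-norm case is identical. Taking square roots throughout and recalling $\|\bm{G}\|_{\mathrm{MS}}=\sqrt{\rho(W)}$ then yields the two stated equalities. This is precisely the content of Theorem~2 in \cite{Stoer} specialized to the nonnegative matrix $W$, so in the paper it indeed suffices to cite that result.
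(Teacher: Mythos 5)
Your proposal is correct, and its core reduction is exactly the one the paper relies on: passing from $\bm{G}$ to the nonnegative matrix $W=[\|\bm{G}_{ij}\|_2^2]$, noting that diagonal conjugation $D^{-1}\bm{G}D$ induces the diagonal similarity $E^{-1}WE$ with $E=D^2$, and identifying the squared mixed norms with the induced matrix $1$- and $\infty$-norms of $E^{-1}WE$. The difference is that the paper stops there and simply cites Theorem~2 of Stoer and Witzgall for the identity $\rho(W)=\inf_{E}\|E^{-1}WE\|_1=\inf_{E}\|E^{-1}WE\|_\infty$, whereas you reprove that classical fact from scratch: the lower bound from ``induced norm dominates spectral radius, similarity preserves spectrum,'' the upper bound via the Perron--Frobenius eigenvector in the irreducible case (where your choice $E=\mathrm{diag}(w)$ even shows the infimum is attained), the $1$-norm case by transposition, and the reducible case by the $W+\varepsilon J$ perturbation together with continuity of the spectral radius and entrywise monotonicity of row sums. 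What the paper's citation buys is brevity; what your argument buys is a self-contained lemma plus two small pieces of extra information that the bare citation hides, namely attainment of the infimum for irreducible $W$ and the reason the statement must use an infimum rather than a minimum in general. All the individual steps check out (in particular, the infimum commutes with the square root by monotonicity and continuity), so your proof is a valid, more detailed substitute for the paper's one-line reference.
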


\subsection{Majorization}
As will be seen later, the main result in this paper is given in terms of a majorization type condition. Here, we briefly review some basic concepts and properties in majorization theory. For an extensive treatment of majorization and its applications, one can refer to \cite{MOA}.

For $x, y \in \mathbb{R}^n$, we denote by $x^\downarrow$ and $y^\downarrow$ the rearranged versions of $x$ and $y$ so that their elements are arranged in a non-increasing order. We also denote by $x^\uparrow$ and $y^\uparrow$ the rearranged versions of $x$ and $y$ so that their elements are arranged in a non-decreasing order. We say that $x$ is majorized by $y$, denoted by $x \preccurlyeq y$, if
\begin{align}
x^\downarrow_1 & \leq y^\downarrow_1 \nonumber \\
x^\downarrow_1+x^\downarrow_2 &
\leq y^\downarrow_1+y^\downarrow_2 \nonumber\\
& \vdots \label{bottom} \\
x^\downarrow_1 + x^\downarrow_2 + \dots + x^\downarrow_{n-1} & \leq y^\downarrow_1 + y^\downarrow_2 + \dots + y^\downarrow_{n-1} \nonumber \\
x^\downarrow_1 + x^\downarrow_2 + \dots + x^\downarrow_n & = y^\downarrow_1 + y^\downarrow_2 + \dots + y^\downarrow_n \nonumber
\end{align}
or, equivalently, if
\begin{align}
x^\uparrow_1 & \geq y^\uparrow_1 \nonumber \\
x^\uparrow_1+x^\uparrow_2 &
\geq y^\uparrow_1+y^\uparrow_2 \nonumber \\
& \vdots \label{top} \\
x^\uparrow_1 + x^\uparrow_2 + \dots + x^\uparrow_{n-1} & \geq y^\uparrow_1 + y^\uparrow_2 + \dots + y^\uparrow_{n-1} \nonumber \\
x^\uparrow_1 + x^\uparrow_2 + \dots + x^\uparrow_n & = y^\uparrow_1 + y^\uparrow_2 + \dots + y^\uparrow_n . \nonumber
\end{align}
Quite often, the physical interpretation of majorization is more interesting in applications.
It orders the level of fluctuations
when the averages are the same. In other words, $x\!\preccurlyeq\! y$ says that the elements of $x$ are more even or, less spread out, than the elements of $y$.

Now, if the last equality in (\ref{bottom}) is changed to an inequality $\leq$, $x$ is said to be weakly majorized by $y$ from below,
denoted by $x \preccurlyeq_w y$.
Likewise, if the last equality in (\ref{top}) is changed to an inequality $\geq$, $x$ is said to be weakly majorized by $y$ from above,
denoted by $x \preccurlyeq^w y$.
The weak majorization $\preccurlyeq_w$ and $\preccurlyeq^w$ order the level of
fluctuations and the averages in a combined way.

Further, if all the inequalities $\leq$ in (\ref{bottom}), including the last equality, are changed to strict inequalities $<$, then $x$ is said to be strictly weakly majorized by $y$ from below, denoted by $x \prec_w y$. If all the inequalities $\geq$ in (\ref{top}), including the last equality, are changed to strict inequalities $>$, then $x$ is said to be strictly weakly majorized by $y$ from above, denoted by $x \prec^w y$.

Note that the majorization $\preccurlyeq$ and the weak majorizations $\preccurlyeq_w$ and
$\preccurlyeq^w$ are only pre-orders in $\mathbb{R}^n$. If we say that $x\sim y$ whenever $x=\Pi y$ for some permutation matrix $\Pi$, then $\sim$ defines an equivalence relation on $\mathbb{R}^n$. Denote by $\mathbb{R}^n\!/\!\sim$ the quotient of $\mathbb{R}^n$ with respect to $\sim$. Then the majorization $\preccurlyeq$ and the weak majorizations $\preccurlyeq_w$ and
$\preccurlyeq^w$ are partial orders in $\mathbb{R}^n\!/\!\sim$.

The following lemma characterizes the relation between majorization and weak majorization.
\begin{lemma}[\cite{MOA}]
\label{weaklymaj}
$x \preccurlyeq^w y$ ($x \prec^w y$, respectively), if and only if there exists $z$ such that $x\geq z$ ($x>z$, respectively), and $z\preccurlyeq y$.
\end{lemma}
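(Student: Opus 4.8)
The plan is to prove the two equivalences by one construction, settling the non-strict case first and deducing the strict case by a perturbation. Throughout, write $s_k(x)=\sum_{i=1}^k x^\uparrow_i$ for the sum of the $k$ smallest entries of $x$, so that $x \preccurlyeq^w y$ is precisely the assertion $s_k(x)\ge s_k(y)$ for $k=1,\dots,n$, while $z\preccurlyeq y$ is the assertion $s_k(z)\ge s_k(y)$ for $k=1,\dots,n-1$ together with $s_n(z)=s_n(y)$. The single fact I would lean on repeatedly is the variational formula $s_k(x)=\min_{|S|=k}\sum_{i\in S}x_i$, which shows at once that $s_k$ is monotone with respect to the componentwise order: if $x\ge z$ then, taking $S^\ast$ to be a minimizing index set for $x$, we get $s_k(x)=\sum_{i\in S^\ast}x_i\ge\sum_{i\in S^\ast}z_i\ge s_k(z)$, the first inequality being strict whenever $x>z$.

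Sufficiency is then immediate. Given $z$ with $x\ge z$ and $z\preccurlyeq y$, monotonicity gives $s_k(x)\ge s_k(z)$ and majorization gives $s_k(z)\ge s_k(y)$, so $s_k(x)\ge s_k(y)$ for every $k$ and $x\preccurlyeq^w y$. If instead $x>z$ componentwise, the first chain becomes strict for every $k$, and since $s_n(z)=s_n(y)$ we also obtain $s_n(x)>s_n(y)$; hence all $n$ inequalities are strict and $x\prec^w y$.

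For necessity I would construct $z$ explicitly by truncating $x$ from above. Define $z_i=\min(x_i,c)$ and choose the level $c$ so that $s_n(z)=\sum_i\min(x_i,c)$ equals $s_n(y)$; such a $c$ exists because $c\mapsto\sum_i\min(x_i,c)$ is continuous and nondecreasing, ranges over $(-\infty,s_n(x)]$, and $s_n(x)\ge s_n(y)$ by the last inequality defining $\preccurlyeq^w$. By construction $z\le x$ and $s_n(z)=s_n(y)$, so it remains only to verify $s_k(z)\ge s_k(y)$ for $k<n$. Sorting increasingly, the truncated vector has sorted entries $\min(x^\uparrow_i,c)$; for indices $k$ below the truncation threshold $j$ (where $x^\uparrow_j\le c<x^\uparrow_{j+1}$) one has $s_k(z)=s_k(x)\ge s_k(y)$ directly, while for $k>j$ a short calculation reduces the required bound to $s_k(y)\le\frac{(k-j)s_n(y)+(n-k)s_j(y)}{n-j}$. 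This last inequality says exactly that the point $(k,s_k(y))$ lies below the chord of $k\mapsto s_k(y)$ joining $(j,s_j(y))$ and $(n,s_n(y))$, which holds because $s_k(y)$ has nondecreasing increments $y^\uparrow_k$ and is therefore convex in $k$. Thus $z\preccurlyeq y$.

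Finally, the strict case is obtained by perturbation rather than a new construction. If $x\prec^w y$, then $s_k(x)>s_k(y)$ for every $k$, so for all sufficiently small $\varepsilon>0$ the shifted vector $x-\varepsilon\mathbf 1$ still satisfies $s_k(x-\varepsilon\mathbf 1)=s_k(x)-k\varepsilon\ge s_k(y)$ for all $k$; that is, $x-\varepsilon\mathbf 1\preccurlyeq^w y$. Applying the necessity already proved to $x-\varepsilon\mathbf 1$ yields $z$ with $z\le x-\varepsilon\mathbf 1$ and $z\preccurlyeq y$, whence $z<x$ componentwise, as required. The main obstacle in this argument is the necessity step—specifically, checking that the truncated vector is majorized by $y$—and the convexity of the partial-sum sequence $k\mapsto s_k(y)$ is precisely the property that makes the truncation construction succeed.
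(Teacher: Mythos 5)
Your proof is correct, but there is nothing in the paper to compare it against: the paper states this lemma with a citation to Marshall--Olkin--Arnold \cite{MOA} and gives no proof, using it as a black box in the sufficiency halves of Theorems \ref{maintheorem} and \ref{fadingmr}. What you have produced is a self-contained elementary argument. The sufficiency direction is handled exactly right: the variational formula $s_k(x)=\min_{|S|=k}\sum_{i\in S}x_i$ gives monotonicity of the partial sums of smallest entries under the componentwise order, with strictness when $x>z$, and the $k=n$ case of $x\prec^w y$ comes from $s_n(x)>s_n(z)=s_n(y)$. The necessity direction via the truncation (water-filling) construction $z_i=\min(x_i,c)$, with $c$ chosen so that $\sum_i\min(x_i,c)=s_n(y)$, is the essential content; your reduction of the remaining verification $s_k(z)\ge s_k(y)$ for $k$ above the truncation index $j$ to the chord inequality for the convex sequence $k\mapsto s_k(y)$ is valid, though note that this reduction silently uses $s_j(x)\ge s_j(y)$ (available from $x\preccurlyeq^w y$), which is worth making explicit, and that the boundary cases $j=0$ (everything truncated) and $j=n$ (nothing truncated, forced when $s_n(x)=s_n(y)$) should be mentioned, with the convention $s_0=0$ making your formula uniform. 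Deducing the strict case by shifting to $x-\varepsilon\mathbf{1}$ for small $\varepsilon>0$ and invoking the non-strict case is clean and avoids repeating the construction with strict inequalities; this perturbation trick is legitimate because only finitely many strict inequalities $s_k(x)>s_k(y)$ need to survive the shift.
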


Another useful lemma is given below.
\begin{lemma}[\cite{MOA}]
\label{consd}There exists a real symmetric matrix $X$ with eigenvalues $\lambda_1,\lambda_2,\dots,\lambda_n$, and diagonal elements $d_1,d_2,\dots,d_n$, if and only if
\begin{align*}
\begin{bmatrix}d_1&d_2&\cdots&d_n\end{bmatrix}'\preccurlyeq\begin{bmatrix}\lambda_1&\lambda_2&\cdots&\lambda_n\end{bmatrix}'.
\end{align*}
\end{lemma}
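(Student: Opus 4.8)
The plan is to recognize this statement as the classical Schur--Horn theorem and to establish its two implications separately. For \emph{necessity}, suppose $X$ is real symmetric with the prescribed eigenvalues and diagonal. I would diagonalize $X=U\Lambda U'$ with $U$ orthogonal and $\Lambda=\mathrm{diag}\{\lambda_1,\dots,\lambda_n\}$, so that each diagonal entry reads $d_i=\sum_{j}U_{ij}^2\lambda_j$. The matrix $S=[U_{ij}^2]$ is then \emph{doubly stochastic}, because orthogonality of $U$ forces every row sum and every column sum of $S$ to equal $1$; hence $d=S\lambda$. Invoking the Hardy--Littlewood--P\'olya characterization, namely that $d=S\lambda$ for some doubly stochastic $S$ is equivalent to $d\preccurlyeq\lambda$, yields the claimed majorization. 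Alternatively, the partial-sum inequalities follow directly from the Ky Fan maximum principle $\sum_{j=1}^k\lambda^\downarrow_j=\max_{V'V=I_k}\mathrm{tr}(V'XV)$: choosing $V$ to consist of the standard basis vectors indexing the $k$ largest diagonal entries gives $\sum_{j=1}^k d^\downarrow_j\le\sum_{j=1}^k\lambda^\downarrow_j$, with equality at $k=n$ forced by the trace.

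For \emph{sufficiency}, assume $d\preccurlyeq\lambda$ and construct $X$ by induction on $n$, starting from the diagonal matrix $\mathrm{diag}\{\lambda_1,\dots,\lambda_n\}$ and using plane (Givens) rotations, each of which preserves the spectrum while redistributing mass between two diagonal entries. The base case $n=1$ is trivial. For the step, I would first note that the majorization forces $\lambda^\downarrow_n\le d^\downarrow_n\le\lambda^\downarrow_1$, so that a $2\times 2$ symmetric block with eigenvalues $\lambda^\downarrow_1,\lambda^\downarrow_n$ and one diagonal entry equal to the smallest target $d^\downarrow_n$ exists; its other diagonal entry is the residual $\lambda'=\lambda^\downarrow_1+\lambda^\downarrow_n-d^\downarrow_n$, which again lies in $[\lambda^\downarrow_n,\lambda^\downarrow_1]$. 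After a rotation realizing this block, the coordinate carrying $d^\downarrow_n$ splits off, leaving an $(n-1)$-dimensional problem with eigenvalue list $\{\lambda',\lambda^\downarrow_2,\dots,\lambda^\downarrow_{n-1}\}$ and target diagonal $\{d^\downarrow_1,\dots,d^\downarrow_{n-1}\}$, to which the inductive hypothesis applies once the reduced majorization is verified.

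The hard part will be precisely that last verification: showing that reinserting the residual eigenvalue $\lambda'$ into the sorted list leaves a valid majorization $\{d^\downarrow_1,\dots,d^\downarrow_{n-1}\}\preccurlyeq\{\lambda',\lambda^\downarrow_2,\dots,\lambda^\downarrow_{n-1}\}$. The sum constraint is immediate, since $\lambda'=\lambda^\downarrow_1+\lambda^\downarrow_n-d^\downarrow_n$ and $\sum_i d_i=\sum_i\lambda_i$, but the partial-sum inequalities require careful bookkeeping of where $\lambda'$ lands in the reordered sequence and a case analysis against the original inequalities defining $d\preccurlyeq\lambda$. This combinatorial step, together with checking that the chosen rotation angle indeed attains the target diagonal entry, is the crux; everything else is routine. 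Since the result is the well-known Schur--Horn theorem, one could alternatively simply cite \cite{MOA}, which is the route the statement already signals.
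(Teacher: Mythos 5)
The paper gives no proof of this lemma at all: it is the classical Schur--Horn theorem, stated with a citation to \cite{MOA}, so there is no in-paper argument to compare against. Your necessity direction is correct and standard: $d_i=\sum_j U_{ij}^2\lambda_j$ with $[U_{ij}^2]$ doubly stochastic, plus the Hardy--Littlewood--P\'olya characterization, works, and so does the Ky Fan variant.

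The sufficiency induction, however, has a genuine gap, and it sits exactly at the step you flagged as the crux: with your choice of pairing, the reduced majorization is not merely hard to verify---it is false. You pair the extreme eigenvalues $\lambda^\downarrow_1,\lambda^\downarrow_n$ to realize the smallest target $d^\downarrow_n$, leaving the residual $\lambda'=\lambda^\downarrow_1+\lambda^\downarrow_n-d^\downarrow_n$. Take $\lambda=(4,2,0)$ and $d=(3,\,3/2,\,3/2)$. Then $d\preccurlyeq\lambda$ (partial sums $3\leq 4$, $4.5\leq 6$, $6=6$), and your step produces $\lambda'=4+0-3/2=5/2$, so the reduced problem asks for a symmetric matrix with eigenvalues $\{5/2,\,2\}$ and diagonal $(3,\,3/2)$. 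This is impossible: a diagonal entry $e_i'Xe_i$ of a symmetric matrix can never exceed its largest eigenvalue, and indeed $(3,\,3/2)\not\preccurlyeq(5/2,\,2)$ already at the first inequality. The classical proofs avoid this by pairing \emph{adjacent} eigenvalues that bracket the target: choose $k$ with $\lambda^\downarrow_k\geq d^\downarrow_n\geq\lambda^\downarrow_{k+1}$, realize $d^\downarrow_n$ by a rotation in the $(k,k+1)$ plane, and replace the pair by the single residual $\lambda^\downarrow_k+\lambda^\downarrow_{k+1}-d^\downarrow_n$; with that choice the reduced majorization can be verified and the induction closes (on the example above this pairs $\{2,0\}$, leaves residual $1/2$, and $(3,\,3/2)\preccurlyeq(4,\,1/2)$ indeed holds). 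This corrected construction is precisely the algorithm of Chan and Li \cite{Chan}, which the paper itself invokes for building the isometry $U$ in the sufficiency proofs of Theorems \ref{maintheorem} and \ref{fadingmr}; alternatively, citing \cite{MOA} as the paper does is of course legitimate.
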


\vspace{5pt}
When the majorization condition in Lemma \ref{consd} is satisfied, efficient algorithms for finding the desired symmetric matrix $X$ have also been developed in the literature. See for example \cite{Chan,Isaacson}.

\section{Networked Stabilizability via MIMO Communication--AWGN Subchannels}
Starting from this section, we are dedicated to establishing a fundamental limitation on the subchannel capacities required for networked stabilizability via MIMO communication.
%answering the motivating questions raised before:
%\begin{itemize}
%\item What flexibilities and advantages might be brought about by utilizing MIMO communication in MIMO control?
%\item When the capacities of the subchannels are fixed a priori, can we explore some other design freedom to compensate the lack of channel resource allocation?
%\end{itemize}
To this end, we first consider the case with AWGN subchannels. The case with fading subchannels follows in the next section. In both cases, a unified necessary and sufficient condition is obtained for the stabilizability given in terms of a strictly weak majorization condition.

When the subchannels in the MIMO transceiver are modeled as AWGN channels, the closed-loop system has the form as shown in Fig. \ref{closed}. The encoder matrix $T$ and decoder matrix $R$ are to be designed subject to the constraint (\ref{AWGNcons}).
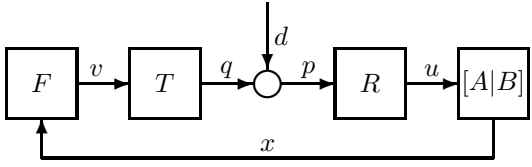
\begin{figure}[htbp]
\begin{center}
\begin{picture}(76,20)
\thicklines
\put(0,5){\framebox(10,10){$F$}}
\put(10,10){\vector(1,0){8}}
\put(18,5){\framebox(10,10){$T$}}
\put(28,10){\vector(1,0){8}}
\put(38,10){\circle{4}}
\put(40,10){\vector(1,0){8}}
\put(48,5){\framebox(10,10){$R$}}
\put(58,10){\vector(1,0){8}}
\put(66,5){\framebox(10,10){$[A|B]$}}
\put(38,22){\vector(0,-1){10}}
\put(71,5){\line(0,-1){6}} \put(71,-1){\line(-1,0){66}}
\put(5,-1){\vector(0,1){6}} \put(32,12){\makebox(0,0){$q$}}
\put(40,17){\makebox(0,0){$d$}} \put(38,1){\makebox(0,0){$x$}}
\put(44,12){\makebox(0,0){$p$}}
\put(13,12){\makebox(0,0){$v$}}
\put(62,12){\makebox(0,0){$u$}}
\end{picture}
\end{center}
\caption{NCS with MIMO communication over AWGN subchannels.}
\label{closed}
\vspace{-5pt}
\end{figure}

Assume that the closed-loop system has reached its steady state and all the signals are wide sense stationary. According to our setup, the total noise $d=\begin{bmatrix}d_1&d_2& \cdots &d_l\end{bmatrix}'$ is a vector white Gaussian noise with power spectral density
\begin{align*}
N=\mathrm{diag}\{N_1,N_2,\dots,N_l\}.
\end{align*}
The complementary sensitivity function, i.e., the closed-loop transfer function from $d$ to $q$, is given by
\begin{align*}
\bm{T}(s)&=TF(sI-A-BRTF)^{-1}BR\\
&=TF(sI-A-BF)^{-1}BR.
\end{align*}
Then, the power spectrum density of $q_i$ has the expression
\[
\{\bm{T}(j\omega)N \bm{T}(j\omega)^*\}_{ii},
\]
and consequently, the power of $q_i$ is given by
\begin{align*}
\bfE[q_i^2]=\frac{1}{2\pi} \int_{-\infty}^{\infty} \{ \bm{T}(j\omega)N \bm{T}(j\omega)^* \}_{ii} d\omega,
\end{align*}
where $\{\cdot\}_{ii}$ stands for the $i$th diagonal element of a matrix.
It follows that the input power constraint (\ref{powercons}) can be rewritten as
\begin{align*}
\frac{1}{2\pi} \int_{-\infty}^{\infty} \{ \bm{T}(j\omega)N \bm{T}(j\omega)^* \}_{ii} d\omega<P_i.
\end{align*}
In view of (\ref{c1}), such constraint can be further translated into
\begin{align}
\frac{1}{2} \frac{1}{2\pi} \int_{-\infty}^{\infty} \left\{N^{-\frac{1}{2}} \bm{T}(j\omega) N \bm{T}(j\omega)^* N^{-\frac{1}{2}} \right\}_{ii}d\omega<\mathfrak{C}_i.\label{constrcap}
\end{align}
The objective is to find requirements on the given subchannel capacities $\mathfrak{C}_i,i=1,2,\dots,l$, such that the networked stabilization can be accomplished subject to the constraints (\ref{constrcap}) via a judicious coding/control co-design. It is also our hope to come up with a systematic procedure for the implementation of the coding/control co-design.

As mentioned before, one of the central issues in solving the coding/control co-design problem is to make the most of the coding mechanism in the MIMO transceiver to facilitate the stabilization of the feedback system. For this purpose, the following understanding provides a significant insight. We are aware that in order to stabilize the NCS, each control input requires certain communication resource for the transmission purpose. As such, the control inputs can be considered as the demand side for the communication resource, while the SISO subchannels in the transceiver are considered as the supply side. The supply capabilities of these SISO subchannels are characterized by their capacities. The challenge lies in the fact that the subchannel capacities are now given a priori and, therefore, the demand and supply may not match in general. To resolve such demand/supply imbalance, it is crucial to observe that the encoder matrix $T$ has an effect of mixing the demands from different control inputs. Bearing this in mind, a smart idea would be to exploit the coding mechanism judiciously so that after the mixing, the demands will be reshaped properly so as to match the supplies. For comparison, the channel resource allocation utilized in \cite{Qiu,Chenb,Xiaob} does the exact opposite, i.e., tailoring the supplies so as to match the demands. It is worth mentioning that demand shaping is a quite general principle in economics. It has led to many successful stories in engineering fields as well such as power systems \cite{Nayyar,Tan}, transportation \cite{Nie}, and data networks \cite{Loiseau}, etc.

The idea of demand shaping turns out to work perfectly. We arrive at a necessary and sufficient condition for the solvability of the coding/control co-design problem given in terms of a strictly weak majorization relation, as shown in the following theorem. The proof is simply making the above understanding precise in a formal mathematical way.

\begin{theorem}
\label{maintheorem}
$[A|B]$ is stabilizable via MIMO communication over AWGN subchannels, if and only if
\begin{multline}
\begin{bmatrix}\mathfrak{C}_1&\mathfrak{C}_2&\cdots&\mathfrak{C}_l\end{bmatrix}'\\\prec^w \begin{bmatrix}H(A_1)&H(A_2)&\cdots&H(A_k)&0&\cdots&0\end{bmatrix}',\label{mr}
\end{multline}
where $H(A_i),i=1,2,\dots,k$, are the topological entropies of the cyclic subsystems $[A_i|b_i]$ as in (\ref{cyclicdec}).
\end{theorem}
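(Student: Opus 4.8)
The plan is to convert the analytic stabilizability constraint (\ref{constrcap}) into a purely algebraic statement about the diagonal and the eigenvalues of one positive semidefinite matrix, and then to read off the theorem from the Schur--Horn theorem (Lemma~\ref{consd}) together with the weak-majorization characterization (Lemma~\ref{weaklymaj}). First I would normalize the noise: absorbing $N^{1/2}$ into the coding matrices via $\tilde T=N^{-1/2}T$ and $\tilde R=RN^{1/2}$ preserves $RT=I$ and lets me assume $N=I$. Writing $\bm{T}_0(s)=F(sI-A-BF)^{-1}B$ for the inner complementary sensitivity, the constraints (\ref{constrcap}) become $\Phi_{ii}<2\mathfrak{C}_i$, $i=1,\dots,l$, where $\Phi=\frac{1}{2\pi}\int_{-\infty}^{\infty}\bm{T}(j\omega)\bm{T}(j\omega)^*\,d\omega$ and $\bm{T}=T\bm{T}_0R$. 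Thus the whole design problem is to push the diagonal of the $l\times l$ matrix $\Phi$ below the supply vector $(2\mathfrak{C}_i)$.

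Next I would eliminate the coding variables from the spectrum of $\Phi$. Among all left inverses of a full-column-rank $T$, the pseudoinverse $R=T^{+}=(T^*T)^{-1}T^*$ minimizes $RR^*$ in the Loewner order and hence minimizes every diagonal entry of $\Phi$; since this only helps feasibility, I may take $R=T^{+}$ in both directions. Setting $K=T^*T$, a short computation shows that the nonzero eigenvalues of $\Phi=T(\frac{1}{2\pi}\int\bm{T}_0K^{-1}\bm{T}_0^*\,d\omega)T^*$ coincide with the eigenvalues of $\frac{1}{2\pi}\int\bm{M}\bm{M}^*\,d\omega$, where $\bm{M}=K^{1/2}\bm{T}_0K^{-1/2}=(K^{1/2}F)(sI-A-BF)^{-1}(BK^{-1/2})$ is exactly the inner complementary sensitivity of the input-transformed plant $[A\,|\,BK^{-1/2}]$. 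By Remark~\ref{remarksyscyc} this transformed plant has the same cyclic topological entropies $H(A_1),\dots,H(A_k)$. Hence the eigenvalue vector of $\Phi$ is always the ``demand-eigenvalue'' vector $\nu$ of some input representative of the plant, padded with $l-m$ zeros, and is therefore insensitive to the particular encoder.

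The heart of the argument, and the step I expect to be the main obstacle, is the following fundamental bound: over all stabilizing $F$, the demand-eigenvalue vector $\nu\in\mathbb{R}^m$ of $V=\frac{1}{2\pi}\int\bm{T}_0\bm{T}_0^*\,d\omega$ satisfies the bottom partial-sum inequalities
\[
\textstyle\sum_{i=1}^{m-k+t}\nu^{\uparrow}_i \;\geq\; 2\sum_{i=k-t+1}^{k}H(A_i),\qquad t=1,\dots,k,
\]
with equality approached. The equality (achievability) side is the easy half: by the cyclic decomposition (Lemma~\ref{lemmasyscyc}) I stabilize the block-triangular system block by block and apply the $\mathcal{H}_2$-optimal feedback of Lemma~\ref{ocs} to each single-input block $[A_i|b_i]$, driving $\|\bm{T}_i\|_2^2\to 2H(A_i)$ and $V\to\mathrm{diag}(2H(A_1),\dots,2H(A_k),0,\dots,0)$. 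The inequality side is the real work: via the Courant--Fischer/Ky~Fan representation $\sum_{i=1}^{s}\nu^{\uparrow}_i=\min_{U^*U=I_s}\|U^*\bm{T}_0\|_2^2$ of the bottom sums, one must show that no $s=m-k+t$ output directions of any stabilizing $F$ can evade the instability carried by the $t$ least demanding cyclic subsystems, which I would certify by a multi-output extension of Lemma~\ref{ocs} exploiting the nested spectra $\sigma(A_k)\subseteq\cdots\subseteq\sigma(A_1)$ from Remark~\ref{remarkmatcyc}.

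Granting this bound, both directions close mechanically. For sufficiency, assume (\ref{mr}); Lemma~\ref{weaklymaj} yields a vector $d$ with $2\mathfrak{C}_i>d_i$ and $d\preccurlyeq 2(H(A_1),\dots,H(A_k),0,\dots,0)$, and the Schur--Horn theorem (Lemma~\ref{consd}) produces an orthogonal $U$ with $U\,\mathrm{diag}(2H(A_1),\dots,2H(A_k),0,\dots,0)\,U^*$ having diagonal $d$. Taking the block-diagonal optimal $F$, the encoder $T$ equal to the first $m$ columns of $U$, and $R=T^{+}=T^*$, realizes $\Phi=TVT^*$ with diagonal arbitrarily close to $d$, so $\Phi_{ii}<2\mathfrak{C}_i$; the slack from the strictness in (\ref{mr}) absorbs the fact that Lemma~\ref{ocs} is only approached when $A$ has imaginary-axis modes. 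For necessity, any feasible design gives $2\mathfrak{C}_i>\Phi_{ii}$ together with $(\Phi_{ii})\preccurlyeq\lambda(\Phi)$ (the easy half of Lemma~\ref{consd}), so Lemma~\ref{weaklymaj} gives $(2\mathfrak{C}_i)\prec^w\lambda(\Phi)$. Since $\lambda(\Phi)=(\nu,0,\dots,0)$ and the fundamental bound says $\lambda(\Phi)\preccurlyeq^w 2(H(A_1),\dots,H(A_k),0,\dots,0)$, comparing bottom partial sums (transitivity of $\prec^w$ through $\preccurlyeq^w$) delivers $(2\mathfrak{C}_i)\prec^w 2(H(A_1),\dots,H(A_k),0,\dots,0)$, which is precisely (\ref{mr}).
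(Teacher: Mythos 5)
Your reductions are correct as far as they go: normalizing $N$, taking $R=T^{+}$ (which indeed minimizes $RR^{*}$ in the Loewner order among left inverses of $T$ and hence can only help feasibility), identifying the nonzero spectrum of $\Phi$ with that of the Gram matrix of the complementary sensitivity of the input-transformed plant $[A\,|\,BK^{-1/2}]$, and the final bookkeeping via Lemma~\ref{weaklymaj} and the easy half of Lemma~\ref{consd} are all sound. But the proof has a hole exactly where you yourself locate ``the real work'': the fundamental bound $\sum_{i=1}^{m-k+t}\nu^{\uparrow}_i \geq 2\sum_{i=k-t+1}^{k}H(A_i)$ is asserted, not proven. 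A ``multi-output extension of Lemma~\ref{ocs} exploiting the nested spectra'' is precisely the statement that needs establishing; it does not follow from Lemma~\ref{ocs} by any routine argument, and it carries the entire control-theoretic content of necessity. The paper proves this content by a different route: for each $j$ it performs a controllable/uncontrollable decomposition of $[A|BR]$ with respect to the first $j-1$ columns of $BR$, notes that the uncontrollable subsystem $[\tilde{A}_{22}|\tilde{B}_{22}]$ must be stabilized through the remaining subchannels alone, applies Lemma~\ref{ocs} to that subsystem (using independence of $y_1$ and $\tilde{d}$ to discard cross terms), and invokes Remark~\ref{remarksyscyc} to get $H(\tilde{A}_{22})\geq\sum_{i=j}^{k}H(A_i)$. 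If you wish to keep the Ky Fan formulation, you would still have to run essentially this argument with the minimizing isometry playing the role of the trailing subchannels; without it, necessity is unproved.

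Second, the ``easy half'' of your bound is also not right as stated. With $F$ block diagonal built from the per-block optimal gains of Lemma~\ref{ocs}, $V$ does \emph{not} tend to $\mathrm{diag}\{2H(A_1),\dots,2H(A_k),0,\dots,0\}$: in the cyclic form (\ref{cyclicdec}) the matrix $B$ has nonzero off-diagonal blocks $\ast$, so $\bm{T}_0$ is block upper triangular with nonvanishing off-diagonal entries and $\mathrm{tr}(V)=\|\bm{T}_0\|_2^2$ strictly exceeds $2H(A)$ in general; nothing in your construction sends these coupling terms to zero. This is exactly why the paper introduces the scalings $D=\mathrm{diag}\{1,\epsilon,\dots,\epsilon^{m-1}\}$ and $S=\mathrm{diag}\{I_{n_1},\epsilon I_{n_2},\dots,\epsilon^{k-1}I_{n_k}\}$ in (\ref{coding}): then $\bar{B}=S^{-1}BD$ has $o(\epsilon)$ off-diagonal blocks, the channel-input covariance becomes $U\left(\mathrm{diag}\{H(A_1),\dots,H(A_k),0,\dots,0\}+o(\epsilon)\right)U'$ as in (\ref{eq1}), and the strict inequalities supplied by Lemma~\ref{weaklymaj} absorb the $o(\epsilon)$ error. (That strictness is needed to absorb the coupling error --- not, as you suggest, to handle imaginary-axis modes, which the paper removes by assumption at the outset.) So your sufficiency construction, with $T$ a plain isometry and no $\epsilon$-scaling, does not satisfy the constraints (\ref{constrcap}); the fix is the paper's $D$, $S$ mechanism, which your proposal is missing.
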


\begin{proof}
For brevity, assume that all the eigenvalues of $A$ lie in the open right half complex plane. This assumption can be removed following the same arguments as in \cite{Braslavsky,Chenb,Qiu,Xiaob}.

We first show the necessity. Note that the subchannels can always be reordered so that their capacities are arranged in a non-increasing order. Therefore, without loss of generality, assume that $\mathfrak{C}_1\geq \mathfrak{C}_2\geq \cdots \geq \mathfrak{C}_l$. In view of Remark \ref{remarkmatcyc}, we have $H(A_1)\geq H(A_2)\geq \cdots\geq H(A_k)$.

Suppose there exists a state feedback gain $F$ together with an encoder matrix $T$ and a decoder matrix $R$ such that the NCS is stabilized and the constraints (\ref{constrcap}) are satisfied. It suffices to show
the inequality
\begin{align}
\sum_{i=j}^l\mathfrak{C}_i>\sum_{i=j}^k H(A_i)\label{maine}
\end{align}
holds for $j=1,2,\dots,k$. The case when $j=1$ is quite straightforward since
\begin{align*}
\sum_{i=1}^l\mathfrak{C}_i>&\sum_{i=1}^l\frac{1}{2} \frac{1}{2\pi} \int_{-\infty}^{\infty} \left\{N^{-\frac{1}{2}} \bm{T}(j\omega) N \bm{T}(j\omega)^* N^{-\frac{1}{2}} \right\}_{ii}d\omega\\
=&\frac{1}{2}\|N^{-\frac{1}{2}}\bm{T}(s)N^{\frac{1}{2}}\|_2^2\geq H(A)= \sum_{i=1}^k H(A_i),
\end{align*}
where the second inequality follows from Lemma \ref{ocs}. We proceed to show the case when $j=2$. Let us carry out the controllable-uncontrollable decomposition to the system $[A|B\!R]$ with respect to the first column of $B\!R$, i.e., find a state space transformation $x(t)=Py(t)$ such that the system $[A|B\!R]$ is transformed to
\begin{align*}
\begin{bmatrix}\dot{y}_1(t)\\ \dot{y}_2(t)\end{bmatrix}=\begin{bmatrix}\tilde{A}_{11}&\tilde{A}_{12}\\0&\tilde{A}_{22}\end{bmatrix}\begin{bmatrix}y_1(t)\\ y_2(t)\end{bmatrix}+\begin{bmatrix}\tilde{B}_{11}&\tilde{B}_{12}\\0&\tilde{B}_{22}\end{bmatrix}\begin{bmatrix}p_1(t)\\ \tilde{p}(t)\end{bmatrix},
\end{align*}
where
\begin{align*}
p_1(t)&=q_1(t)+d_1(t),\\
\tilde{p}(t)&=\tilde{q}(t)+\tilde{d}(t),\\
\tilde{q}(t)&=\begin{bmatrix}q_2(t)&q_3(t)&\cdots&q_l(t)\end{bmatrix}',\\
\tilde{d}(t)&=\begin{bmatrix}d_2(t)&d_3(t)&\cdots&d_l(t)\end{bmatrix}'.
\end{align*}
Set $\tilde{F}\!=\!TFP$ and partition $\tilde{F}$ compatibly as $\tilde{F}\!=\!\begin{bmatrix}\tilde{F}_{11}&\tilde{F}_{12}\\ \tilde{F}_{21}&\tilde{F}_{22}\end{bmatrix}$.
Consider the subsystem $[\tilde{A}_{22}|\tilde{B}_{22}]$ with closed-loop dynamics:
\begin{align*}
\dot{y}_2(t)&=\tilde{A}_{22}y_2(t)+\tilde{B}_{22}\tilde{q}(t)+\tilde{B}_{22}\tilde{d}(t),\nonumber\\
\tilde{q}(t)&=\tilde{F}_{22}y_2(t)+\tilde{F}_{21}y_1(t).
\end{align*}
Applying the Laplace transform to both sides of the above equations yields
\begin{align*}
\mathcal{L}(\tilde{q}(t))=\begin{bmatrix}\tilde{\bm{T}}_{21}(s)&\tilde{\bm{T}}_{22}(s)\end{bmatrix}\begin{bmatrix}\mathcal{L}(y_1(t))\\\mathcal{L}(\tilde{d}(t))\end{bmatrix},
\end{align*}
where
\begin{align*}
\tilde{\bm{T}}_{21}(s)&=\tilde{F}_{21}+\tilde{F}_{22}(sI-\tilde{A}_{22}-\tilde{B}_{22}\tilde{F}_{22})^{-1}\tilde{B}_{22}\tilde{F}_{21},\\
\tilde{\bm{T}}_{22}(s)&=\tilde{F}_{22}(sI-\tilde{A}_{22}-\tilde{B}_{22}\tilde{F}_{22})^{-1}\tilde{B}_{22}.
\end{align*}
Since $y_1(t)$ is independent of $\tilde{d}(t)$, we have
\begin{align*}
\bfE[q_{i+1}^2]\geq \frac{1}{2\pi} \int_{-\infty}^{\infty} \{ \tilde{\bm{T}}_{22}(j\omega)\tilde{N} \tilde{\bm{T}}_{22}(j\omega)^* \}_{ii} d\omega,
\end{align*}
for $i=1,2,\dots,l-1$, where $\tilde{N}=\mathrm{diag}\{N_2,N_3,\dots,N_l\}$. Consequently, there holds
\begin{align*}
\sum_{i=2}^{l}\mathfrak{C}_{i}&>\sum_{i=2}^{l}\frac{1}{2}\frac{\bfE[q_{i}^2]}{N_{i}}\\
&\geq \frac{1}{2}\|\tilde{N}^{-\frac{1}{2}}\tilde{\bm{T}}_{22}(s)\tilde{N}^{\frac{1}{2}}\|_2^2\\
&\geq H(\tilde{A}_{22}).
\end{align*}
Meanwhile, it can be inferred from Remark \ref{remarksyscyc} that $H(\tilde{A}_{22})\!\geq\! \sum_{i=2}^k H(A_i)$ and, thus,
$\sum_{i=2}^{l}\mathfrak{C}_{i}\!>\!\sum_{i=2}^k H(A_i)$. In analogy to the above procedure, we can show that the inequality
(\ref{maine}) also holds for $j=3,\dots,k$, which completes the necessity proof.

To show the sufficiency, we will seek a state feedback gain $F$ together with an encoder matrix $T$ and a decoder matrix $R$ such that the NCS is stabilized and the constraints (\ref{constrcap}) are satisfied. Without loss of generality, assume that $[A|B]$ is in the cyclic decomposition form (\ref{cyclicdec}), where each cyclic subsystem $[A_i|b_i],i=1,2,\dots,k$, is stabilizable with state dimension $n_i$. For each $[A_i|b_i]$, we can design a stabilizing state feedback gain $f_i$ such that $\|\bm{T}_i(s)\|_2^2=2H(A_i)$, where
\begin{align}
\bm{T}_i(s)=f_i(sI-A_i-b_if_i)^{-1}b_i.\label{Ti}
\end{align}
The existence of such $f_i$ is guaranteed by Lemma \ref{ocs}.
Let
\begin{align*}
f=\mathrm{diag}\{f_1,f_2,\dots,f_k\},
\end{align*}
and design $F$ to be
\begin{align}
F=\begin{bmatrix}f\\0_{(m\!-\!k)\times n}\end{bmatrix}.\label{F}
\end{align}
It is easy to verify that $F$ is stabilizing, i.e., $A+BF$ is stable. Regarding the encoder/decoder pair, let
\begin{align}
T=N^{\frac{1}{2}}UD^{-1}, \text{ and } R=DU'N^{-\frac{1}{2}},\label{coding}
\end{align}
where $U\in\mathbb{R}^{l\times m}$ is an isometry to be designed and $D=\mathrm{diag}\{1,\epsilon,\dots,\epsilon^{m-1}\}$ with $\epsilon$ being a small positive number.
Also set $S=\mathrm{diag}\{I_{n_1},\epsilon I_{n_2}, \dots, \epsilon^{k-1} I_{n_k}\}$. Then
\begin{align*}
\bm{T}(s)&=TF(sI-A-BF)^{-1}BR\\
&=N^{\frac{1}{2}}U\bar{F}(sI-\bar{A}-\bar{B}\bar{F})^{-1}\bar{B}U'N^{-\frac{1}{2}},
\end{align*}
where
\begin{align}
\bar{F}&=D^{-1}FS=F,\label{barF}\\
\bar{A}&=S^{-1}AS=\begin{bmatrix}
A_1 & 0 & \cdots & 0 \\
0 & A_2 & \ddots & \vdots \\
\vdots & \ddots & \ddots & 0\\
0 & \cdots & 0 & A_k
\end{bmatrix},\label{barA}\\
\bar{B}&=S^{-1}BD=\begin{bmatrix}
b_1 & o(\epsilon) & \cdots & o(\epsilon)&o(\epsilon) \\
0 & b_2 & \ddots & \vdots&\vdots \\
\vdots & \ddots & \ddots & o(\epsilon)&o(\epsilon)\\
0 & \cdots & 0 & b_k&o(\epsilon)
\end{bmatrix},\label{barB}
\end{align}
and $\frac{o(\epsilon)}{\epsilon}$ approaches to a finite constant as $\epsilon\rightarrow 0$. It follows that
\begin{align}
&\frac{1}{2} \frac{1}{2\pi} \int_{-\infty}^{\infty} N^{-\frac{1}{2}} \bm{T}(j\omega) N \bm{T}(j\omega)^* N^{-\frac{1}{2}} d\omega\nonumber\\
&=U\!\left(\!\mathrm{diag}\!\left\{\!\frac{\|\bm{T}_1(s)\|_2^2}{2},\dots,\frac{\|\bm{T}_k(s)\|_2^2}{2},0,\dots,0\!\right\}+o(\epsilon)\!\right)\!U'\nonumber\\
&=U\left(\mathrm{diag}\!\left\{H(A_1),\dots,H(A_k),0,\dots,0\right\}+o(\epsilon)\right)U'.\label{eq1}
\end{align}
When the relation (\ref{mr}) holds, by Lemma \ref{weaklymaj}, there exists a vector $\begin{bmatrix}\gamma_1&\gamma_2&\dots&\gamma_l\end{bmatrix}'$ such that
\begin{align}
\begin{bmatrix}\mathfrak{C}_1&\mathfrak{C}_2&\dots&\mathfrak{C}_l\end{bmatrix}'>\begin{bmatrix}\gamma_1&\gamma_2&\dots&\gamma_l\end{bmatrix}',\label{ine2}
\end{align}
and
\begin{multline}
\begin{bmatrix}\gamma_1&\gamma_2&\cdots&\gamma_l\end{bmatrix}'\\\preccurlyeq \begin{bmatrix}H(A_1)&H(A_2)&\cdots&H(A_k)&0&\cdots&0\end{bmatrix}'.\label{maj1}
\end{multline}
Further, in view of (\ref{maj1}) and Lemma \ref{consd}, an isometry $U$ can be constructed such that
\begin{align}
\left\{U\left(\mathrm{diag}\!\left\{H(A_1),\dots,H(A_k),0,\dots,0\right\}\right)U'\right\}_{ii}=\gamma_i,\label{eq2}
\end{align}
for $i=1,2,\dots,l$. Finally, putting (\ref{eq1}), (\ref{ine2}), and (\ref{eq2}) together, we can verify that the constraints (\ref{constrcap}) are satisfied when $\epsilon$ is sufficiently small. This completes the proof.
\end{proof}

As said before, the majorization $\preccurlyeq$ and weak majorizations
$\preccurlyeq_w$ and $\preccurlyeq^w$ are partial orders in the quotient space $\mathbb{R}^n\!/\!\sim$. For this reason, in the statement of Theorem \ref{maintheorem}, there is no need to assume any monotonicity among the elements of the capacity vector $\begin{bmatrix}\mathfrak{C}_1&\mathfrak{C}_2&\cdots&\mathfrak{C}_l\end{bmatrix}'$. The elements of the right-hand side vector of (\ref{mr}) follow a natural order, i.e., $H(A_1)\geq H(A_2)\geq \dots\geq H(A_k)$. This is due to the property of cyclic decomposition as in Remark \ref{remarkmatcyc}.

The sufficiency proof to Theorem \ref{maintheorem} is constructive. It gives a systematic way to implement the coding/control co-design. The state feedback gain $F$ is designed as in (\ref{F}) by solving $\mathcal{H}_2$ optimal complementary sensitivity for each cyclic subsystem $[A_i|b_i]$. The encoder/decoder pair is designed as in (\ref{coding}), where $D$ is a suitably chosen scaling matrix, and $U$ is an isometry which can be constructed by exploiting the algorithm proposed in \cite{Chan}.

Let us now revisit the intuition of demand shaping to better digest this result. As mentioned before, the topological entropy serves as a measure of the instability of a linear system based on the minimum amount of communication resource required to stabilize the system. From this understanding together with the knowledge of cyclic decomposition, it can be inferred that the vector on the right-hand side of (\ref{mr})
appears as the most uneven demands for the communication resource. Due to the coding mechanism, such demands will be mixed, leading to a set of reshaped demands that are more even, or less spread out. Clearly, the total demand will remain the same after the mixing. Combining these observations yields that in order to enable the reshaped demands to match the given supplies, the supplies must meet the following two requirements:
\begin{enumerate}
\item The total supply should be greater than the total demand, i.e., $\mathfrak{C}>H(A)$.
\item The supplies from different SISO subchannels should be less spread out than the most uneven demands.
\end{enumerate}
It turns out that the above two requirements are exactly what the condition (\ref{mr}) says in light of the physical interpretation of the weak majorization.

What follows is an important implication from \mbox{Theorem \ref{maintheorem}}. Note that we initially assume that the number of subchannels in the MIMO transceiver is greater than or equal to the number of data streams to be transmitted, i.e, $l\geq m$. It turns out that in some cases, it may also be possible to stabilize the NCS with less number of subchannels than the number of data streams. This can be inferred from the majorization type condition (\ref{mr}). In fact, the minimum number of SISO subchannels needed for stabilization is equal to the number of unstable cyclic subsystems $[A_i|b_i]$ yielded from the cyclic decomposition (\ref{cyclicdec}). This observation is in consistence with earlier studies \cite{Heymann,Wonham} in the literature that investigate the minimum number of control inputs required to stabilize a linear system. In that aspect, our result strengthens those studies by indicating a fundamental limitation on the information constraints required for networked stabilization given in terms of a majorization type relation.

One can further deduce the following two corollaries from Theorem \ref{maintheorem}.

\begin{corollary}
\label{cor2}
If the cyclic decomposition of $A$ has only one unstable cyclic block, i.e., $A_1$, then $[A|B]$ is stabilizable via MIMO communication over AWGN subchannels, if and only if $\mathfrak{C}>H(A)$.
\end{corollary}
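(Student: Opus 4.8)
The plan is to specialize the strictly weak majorization condition (\ref{mr}) of Theorem \ref{maintheorem} to the present situation and simply read off what it reduces to. Since the cyclic decomposition of $A$ has only one unstable cyclic block $A_1$, the remaining blocks $A_2,\dots,A_k$ carry no eigenvalues in the open right half complex plane, so that $H(A_2)=\cdots=H(A_k)=0$ while $H(A_1)=H(A)$. Consequently, the right-hand side vector of (\ref{mr}) collapses to $\begin{bmatrix}H(A)&0&\cdots&0\end{bmatrix}'$ of length $l$, and the whole question becomes whether $\begin{bmatrix}\mathfrak{C}_1&\cdots&\mathfrak{C}_l\end{bmatrix}'\prec^w\begin{bmatrix}H(A)&0&\cdots&0\end{bmatrix}'$.

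It is most convenient to test this using the non-decreasing form (\ref{top}) in the definition of $\prec^w$. First I would record that, as a channel capacity, each $\mathfrak{C}_i=\frac{1}{2}P_i/N_i$ is strictly positive, so every partial sum of the non-decreasingly ordered capacities is strictly positive. On the other side, the non-decreasing rearrangement of $\begin{bmatrix}H(A)&0&\cdots&0\end{bmatrix}'$ is $\begin{bmatrix}0&\cdots&0&H(A)\end{bmatrix}'$, whose first $l-1$ partial sums all vanish and whose full sum equals $H(A)$. Hence the first $l-1$ inequalities required by $\prec^w$ read $\sum_{i=1}^{j}\mathfrak{C}_i^\uparrow>0$ for $j=1,\dots,l-1$, which hold automatically by positivity, while the last inequality reads $\sum_{i=1}^{l}\mathfrak{C}_i>H(A)$, i.e., $\mathfrak{C}>H(A)$.

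Putting these together, the strictly weak majorization (\ref{mr}) is equivalent, in this single-unstable-block case, to the single scalar inequality $\mathfrak{C}>H(A)$: if $\mathfrak{C}>H(A)$ then every inequality in (\ref{top}) holds strictly, and conversely if (\ref{mr}) holds then its last inequality already forces $\mathfrak{C}>H(A)$. Invoking Theorem \ref{maintheorem} then yields the claim. I do not anticipate a genuine obstacle here, since the argument is a direct reading of the definition of $\prec^w$; the only point demanding care is to work with the non-decreasing ordering (\ref{top}) and to note the strict positivity of the capacities, which is precisely what renders the first $l-1$ majorization inequalities vacuous and leaves only the total-capacity condition binding.
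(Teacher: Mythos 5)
Your proposal is correct and follows essentially the same route as the paper: both specialize the strictly weak majorization condition (\ref{mr}) of Theorem \ref{maintheorem} to the single-unstable-block case, where the right-hand side collapses to $\begin{bmatrix}H(A)&0&\cdots&0\end{bmatrix}'$, and observe that this holds if and only if $\mathfrak{C}>H(A)$. Your explicit verification via the non-decreasing ordering (\ref{top}) and the strict positivity of the capacities simply fills in the elementary details that the paper's one-line proof leaves implicit.
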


\begin{proof}
When $A$ has only one unstable cyclic block, the condition (\ref{mr}) reduces to
\begin{align*}
\begin{bmatrix}\mathfrak{C}_1&\mathfrak{C}_2&\cdots&\mathfrak{C}_l\end{bmatrix}'\prec^w\begin{bmatrix}H(A)&0&\cdots&0\end{bmatrix}',
\end{align*}
which holds, if and only if $\mathfrak{C}>H(A)$.
\end{proof}

We wish to mention that Corollary \ref{cor2} is consistent with the result obtained in \cite{Shu}. Also note that Corollary \ref{cor2} includes the single-input system as a special case since a stabilizable single-input system only has one unstable cyclic subsystem. Therefore, this corollary also suggests that in stabilizing a single-input system by using MIMO communication, we only require the total capacity be greater than the topological entropy of the open-loop plant. How the individual subchannel capacities are distributed does not matter in this case.

\begin{corollary}
\label{cor3}
If $\mathfrak{C}_1=\mathfrak{C}_2=\dots=\mathfrak{C}_l$, then $[A|B]$ is stabilizable via MIMO communication over AWGN subchannels, if and only if $\mathfrak{C}>H(A)$.
\end{corollary}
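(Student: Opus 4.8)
The plan is to deduce everything from Theorem~\ref{maintheorem}, which already reduces stabilizability to the strictly weak majorization
\[
\begin{bmatrix}\mathfrak{C}_1 & \cdots & \mathfrak{C}_l\end{bmatrix}'\prec^w\begin{bmatrix}H(A_1) & \cdots & H(A_k) & 0 & \cdots & 0\end{bmatrix}'.
\]
Write $x$ for the left vector and $y$ for the right one; both live in $\mathbb{R}^l$, with $\sum_i x_i=\mathfrak{C}$ and $\sum_i y_i=\sum_{i=1}^k H(A_i)=H(A)$. Thus the entire content of the corollary is the elementary majorization fact that, \emph{when all entries of $x$ are equal}, one has $x\prec^w y$ if and only if $\sum_i x_i>\sum_i y_i$.

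For necessity (for which the equal-capacity hypothesis is not even needed) I would simply read off the last line in the defining chain of $\prec^w$: it asserts $x^\uparrow_1+\cdots+x^\uparrow_l>y^\uparrow_1+\cdots+y^\uparrow_l$, that is, $\mathfrak{C}>H(A)$.

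For sufficiency, set $c=\mathfrak{C}/l$, so $x^\uparrow_i=c$ for every $i$, and let $S_j=\sum_{i=1}^j y^\uparrow_i$ be the sum of the $j$ smallest entries of $y$. I want $a_j:=jc-S_j>0$ for $j=1,\dots,l$. The crucial structural observation is that $y^\uparrow$ is sorted in non-decreasing order, so the increments $a_j-a_{j-1}=c-y^\uparrow_j$ are non-increasing in $j$; hence $a_1,\dots,a_l$ is a concave sequence and attains its minimum at one of the two endpoints $j=1$ or $j=l$. At $j=l$ we have $a_l=\mathfrak{C}-H(A)>0$ by hypothesis. At $j=1$ we have $a_1=c-y^\uparrow_1$, where $y^\uparrow_1$ is the smallest entry of $y$; since the minimum of $l$ numbers is at most their average, $y^\uparrow_1\leq H(A)/l<\mathfrak{C}/l=c$, so $a_1>0$ as well. (That $[A|B]$ is unstable guarantees $H(A)>0$ and hence $c>0$, ruling out degenerate cases.) Concavity then forces $a_j>0$ throughout, giving $x\prec^w y$, and Theorem~\ref{maintheorem} delivers stabilizability.

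I expect the only point requiring care to be the left endpoint $a_1>0$: unlike the total-budget inequality at $j=l$, it is not handed to us directly and must be extracted from the averaging inequality $y^\uparrow_1\leq H(A)/l$. Everything else—the reduction to a pure majorization statement and the concavity/endpoint argument—is routine once the non-decreasing sorting of $y$ is exploited.
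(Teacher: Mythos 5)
Your proposal is correct and follows essentially the same route as the paper: both reduce the corollary to Theorem \ref{maintheorem} and observe that for an all-equal capacity vector the strict weak majorization (\ref{mr}) collapses to the single scalar condition $\mathfrak{C}>H(A)$. The paper simply asserts this equivalence without proof, whereas you verify it; your concavity/endpoint argument works, and can even be shortened by noting that the average of the $j$ smallest entries of $y$ never exceeds the overall average, so $\sum_{i=1}^{j}y^{\uparrow}_{i}\leq \tfrac{j}{l}H(A)<\tfrac{j}{l}\mathfrak{C}=jc$ for every $j$ directly.
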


\begin{proof}
When $\mathfrak{C}_1=\mathfrak{C}_2=\dots=\mathfrak{C}_l$, the condition (\ref{mr}) reduces to
\begin{multline}
\frac{1}{l}\begin{bmatrix}\mathfrak{C}&\mathfrak{C}&\cdots&\mathfrak{C}\end{bmatrix}'\\\prec^w \begin{bmatrix}H(A_1)&H(A_2)&\cdots&H(A_k)&0&\cdots&0\end{bmatrix}',\nonumber
\end{multline}
which holds, if and only if $\mathfrak{C}>H(A)$.
\end{proof}

Corollary \ref{cor3} somehow suggests that identical subchannels can best help each other in transmitting the signals.

\section{Networked Stabilizability via MIMO Communication--Fading Subchannels}
The same idea extends to the networked stabilization via MIMO communication over fading subchannels. In this case, the closed-loop system has the form as shown in Fig. \ref{closedfading}, where $\kappa(t)=\mathrm{diag}\{\kappa_1(t),\kappa_2(t),\dots,\kappa_l(t)\}$ stands for the stochastic multiplicative noises in the fading subchannels. The mean and covariance of $\kappa(t)$ are fixed a priori and given by (\ref{M}) and (\ref{Sigma2}), respectively. Therefore, all the subchannel capacities are fixed. As said before, the encoder matrix $T$ and decoder matrix $R$ are to be freely designed subject to a mild constraint as in (\ref{fadingcons}). This enables the controller designer to jointly design the controller and the encoder/decoder pair so as to stabilize the feedback system, leading to a stabilization problem via coding/control co-design.
\begin{figure}[htbp]
\begin{center}
\begin{picture}(84,15)
\thicklines
\put(0,5){\framebox(10,10){$F$}}
\put(10,10){\vector(1,0){10}}
\put(20,5){\framebox(10,10){$T$}}
\put(30,10){\vector(1,0){7}}
\put(37,5){\framebox(10,10){$\kappa(t)$}}
\put(47,10){\vector(1,0){7}}
\put(54,5){\framebox(10,10){$R$}}
\put(64,10){\vector(1,0){10}}
\put(74,5){\framebox(10,10){$[A|B]$}}
\put(79,5){\line(0,-1){6}} \put(79,-1){\line(-1,0){74}}
\put(5,-1){\vector(0,1){6}} \put(33,12){\makebox(0,0){$q$}}
\put(42.5,1){\makebox(0,0){$x$}}
\put(50.5,12){\makebox(0,0){$p$}}
\put(14,12){\makebox(0,0){$v$}}
\put(69,12){\makebox(0,0){$u$}}
\end{picture}
\end{center}
\caption{NCS with MIMO communication over fading subchannels.}
\label{closedfading}
\vspace{-5pt}
\end{figure}
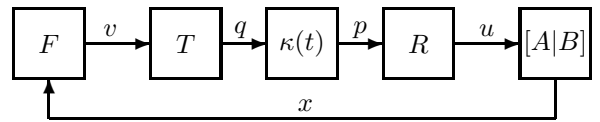

Now, the closed-loop system has the following state space representation:
\begin{align*}
\dot{x}(t)=[A+BR\kappa(t)TF]x(t).
\end{align*}
Denote by $X(t)=\bfE[x(t)x(t)']$ the state covariance. By It\^o's formula \cite{Klebaner}, the evolution of $X(t)$ is governed by the following matrix differential equation:
\begin{multline}
\dot{X}(t)=(A\!+\!BF)X(t)+X(t)(A\!+\!BF)'\\
+BR[\Sigma^2\!\odot\!(TFX(t)F'T')]R'B',\nonumber
\end{multline}
where $\odot$ means Hadamard product. We say that $[A|B]$ is mean-square stabilizable if there exist a state feedback gain $F$ and an encoder/decoder pair such that the closed-loop system in Fig. \ref{closedfading} is mean-square stable, i.e., $\lim_{t\rightarrow\infty}X(t)=0$.

Again, the aim is to find requirements on the subchannel capacities $\mathfrak{C}_i,i=1,2,\dots,l$, such that the networked stabilization can be accomplished via a judicious coding/control co-design.

The next lemma gives several equivalent criteria in verifying the mean-square stabilizability. The equivalence between the implications $(a),(b)$, and $(c)$ can be shown by specializing the well-established results in stochastic control \cite{Boyd,Damm}. The equivalence between the implications $(a)$ and $(d)$ is simply the continuous-time counterpart of Theorem 6.4 in \cite{Eliab}. The details of the proof are omitted here for brevity.
\begin{lemma}
\label{lemmamss}
The following statements are equivalent:
\begin{itemize}
\item[(a)] $[A|B]$ is mean-square stabilizable.
\item[(b)] There exist a feedback gain and an encoder/decoder pair such that the matrix inequality
\begin{multline}
(A\!+\!BF)X+X(A\!+\!BF)'\\
+BR[\Sigma^2\!\odot\!(TFXF'T')]R'B'<0\nonumber
\end{multline}
has a solution $X>0$.
\item[(c)] There exists an encoder/decoder pair such that the matrix inequality
\begin{multline}
A'X+XA\\
\!-\!XBRM[\Sigma^2\!\odot\!(R'B'XBR)]^{-1}\!MR'B'X\!<\!0\label{mrine}
\end{multline}
has a solution $X>0$.
\item[(d)] There exist a feedback gain and an encoder/decoder pair such that
\begin{align*}
\|\bm{T}(s)\Phi\|_{\mathrm{MS}}<1,
\end{align*}
where $\Phi=M^{-1}\Sigma$ and $\bm{T}(s)$ is given by
\begin{align}
\bm{T}(s)=TF(sI-A-BF)^{-1}BRM.\label{cs2}
\end{align}
\end{itemize}
\end{lemma}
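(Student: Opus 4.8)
The plan is to prove the three equivalences in turn, in each case specializing a standard stochastic-control result to the present coding/control structure rather than redeveloping it from scratch. First I would pin down the covariance operator. Writing $\kappa(t)=M+\tilde\kappa(t)$ with $\tilde\kappa(t)$ zero-mean and diagonal covariance $\Sigma^2$, and using the coding constraint (\ref{fadingcons}) to collapse the drift $A+BRMTF$ to $A+BF$, an application of It\^o's formula to $x(t)x(t)'$ produces the stated matrix differential equation $\dot X(t)=\mathcal{L}(X(t))$ for the linear operator $\mathcal{L}(X)=(A+BF)X+X(A+BF)'+BR[\Sigma^2\odot(TFXF'T')]R'B'$ acting on symmetric matrices. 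The only point needing care is that the Hadamard product with the diagonal $\Sigma^2$ retains exactly the diagonal entries $(TFXF'T')_{ii}$, so the noise contribution equals $\sum_i \sigma_i^2 (BRe_i)(e_i'TFXF'T'e_i)(BRe_i)'$; this exhibits the loop as a system with state-multiplicative noise and places it squarely in the setting of \cite{Boyd,Damm}.

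With the operator in hand, $(a)\Leftrightarrow(b)$ is immediate: mean-square stability $\lim_{t\to\infty}X(t)=0$ for every initial condition is, by linearity, equivalent to $\mathcal{L}$ being a stable operator, and the Lyapunov characterization of stable linear operators (the LMI formulation in \cite{Boyd}) says this holds iff there is $X>0$ with $\mathcal{L}(X)<0$, which is precisely (b). For $(b)\Leftrightarrow(c)$ I would pass to the adjoint of $\mathcal{L}$ with respect to the trace inner product $\langle X,Y\rangle=\mathrm{tr}(XY)$; a short calculation gives $\mathcal{L}^*(Y)=(A+BF)'Y+Y(A+BF)+F'T'[\Sigma^2\odot(R'B'YBR)]TF$. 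Since a linear operator is stable iff its adjoint is, (b) is equivalent to the existence of $Y>0$ and an admissible triple with $\mathcal{L}^*(Y)<0$. Now $\mathcal{L}^*(Y)$ is a genuine quadratic form in $F$, so I would complete the square, read off the minimizing gain, and reduce feasibility over $F$ to a Riccati-type inequality in $Y$ and $(T,R)$; invoking (\ref{fadingcons}) to rewrite the quadratic term then yields exactly the inequality (\ref{mrine}). This is the specialization of the generalized Riccati characterization of \cite{Damm}.

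The step I expect to be the main obstacle is this algebraic heart of $(b)\Leftrightarrow(c)$: showing that after completing the square the Riccati term lands in the precise form $BRM[\Sigma^2\odot(R'B'XBR)]^{-1}MR'B'$ appearing in (\ref{mrine}). The difficulty is twofold. The Hadamard product prevents the noise term from being a plain quadratic form, so its diagonal structure must be used to treat it as a sum of rank-one state-multiplicative contributions before the square can be completed; and because the encoder is rectangular ($l\ge m$), the passage from the ``$T$-side'' expression $T'[\Sigma^2\odot(\cdot)]T$ to the ``$RM$-side'' expression $RM[\Sigma^2\odot(\cdot)]MR'$ of (c) relies essentially on the constraint $RMT=I$ together with its transpose. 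Establishing the positive definiteness that makes the inverse in (\ref{mrine}) well defined is part of the same computation.

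Finally, $(a)\Leftrightarrow(d)$ I would obtain directly as the continuous-time transcription of Theorem 6.4 in \cite{Eliab}. With $\bm{T}(s)$ as in (\ref{cs2}) and $\Phi=M^{-1}\Sigma$ one has $\bm{T}(s)\Phi=TF(sI-A-BF)^{-1}BR\Sigma$, and the mean-square small-gain condition $\|\bm{T}(s)\Phi\|_{\mathrm{MS}}<1$ is the frequency-domain counterpart of mean-square stability, via the identification of the spectral radius in the definition of $\|\cdot\|_{\mathrm{MS}}$ with that of the closed-loop second-moment map. After checking the continuous-time bookkeeping (the constant in the capacity definition (\ref{capfading}) and the scaling $\Phi$), the equivalence follows from the cited theorem.
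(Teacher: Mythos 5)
Your proposal is correct and follows essentially the same route as the paper, which itself only cites the specialization of standard stochastic-control results \cite{Boyd,Damm} for the equivalence of $(a)$, $(b)$, $(c)$ and the continuous-time counterpart of Theorem 6.4 in \cite{Eliab} for $(a)\Leftrightarrow(d)$, omitting all details. The details you supply---the It\^o-based covariance operator, the Lyapunov/LMI characterization, the adjoint-plus-completion-of-squares reduction in which $RMT=I$ and a matrix Cauchy--Schwarz argument convert $(T'[\Sigma^2\odot(R'B'XBR)]T)^{-1}$ into the $RM[\Sigma^2\odot(R'B'XBR)]^{-1}MR'$ form of (\ref{mrine})---are consistent with that intended specialization.
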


The above lemma certainly gives a clue on how to verify the mean-square stabilizability. However, it has not been to our satisfactory yet. Extra effort is needed to reach what we are really looking forward to: an explicit analytic condition on the subchannel capacities such that the NCS can be mean-square stabilized via MIMO communication.

We make our way towards that goal by utilizing exactly the same idea from the previous section. As remarked before, the stabilization of the feedback system requires the balance of the demand and supply of the communication resource. When the subchannel capacities are given a priori, the supply side is not manipulatable. However, by exploiting the coding mechanism in the MIMO transceiver, it is possible to reshape the demands so as to match the supplies. From this perspective, one may expect that the strictly weak majorization condition as in (\ref{mr}) should work equally well in the current case. This is indeed true. A slight difference from the previous case is that now the subchannel capacities take the form as in (\ref{capfading}) \mbox{instead of (\ref{c1}).} The result is summarized in the following theorem.

\begin{theorem}
\label{fadingmr}
$[A|B]$ is mean-square stabilizable via MIMO communication over fading subchannels, if and only if the strictly weak majorization condition (\ref{mr}) holds.
\end{theorem}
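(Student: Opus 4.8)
The plan is to reduce Theorem \ref{fadingmr} to Theorem \ref{maintheorem} by a change of variables that turns the fading problem into an AWGN-type problem. The natural entry point is criterion (d) of Lemma \ref{lemmamss}: mean-square stabilizability is equivalent to the existence of $F$, $T$, $R$ obeying (\ref{fadingcons}) with $\|\bm{T}(s)\Phi\|_{\mathrm{MS}}<1$, where $\bm{T}(s)=TF(sI-A-BF)^{-1}BRM$ and $\Phi=M^{-1}\Sigma$. First I would record the numerical link $\Phi=\mathrm{diag}\{1/\sqrt{2\mathfrak{C}_1},\dots,1/\sqrt{2\mathfrak{C}_l}\}$ furnished by (\ref{capfading}); this is exactly the role that $N$ plays in the AWGN constraint (\ref{constrcap}), and it is what lets the fading capacities enter the analysis in the same way.

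The crux is to trade the mean-square norm condition for a family of per-subchannel $\mathcal{H}_2$ energy bounds. By the mixed-norm characterization of $\|\cdot\|_{\mathrm{MS}}$, the inequality $\|\bm{T}(s)\Phi\|_{\mathrm{MS}}<1$ holds if and only if $\|D^{-1}\bm{T}(s)\Phi D\|_{2,1}<1$ for some positive diagonal $D$. The key observation is that this diagonal scaling is \emph{free}: replacing $(T,R)$ by $(D^{-1}T,RD)$ again satisfies (\ref{fadingcons}) and maps $\bm{T}(s)\mapsto D^{-1}\bm{T}(s)D$, so $D$ can be absorbed into the encoder/decoder. Consequently mean-square stabilizability is equivalent to the existence of $F$, $T$, $R$ with $RMT=I$ and $\sum_i\|\bm{T}_{ij}(s)\|_2^2<2\mathfrak{C}_j$ for every $j$. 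Setting $\hat{T}=T$ and $\hat{R}=RM$ gives $\hat{R}\hat{T}=I$ and $\bm{T}(s)=\hat{T}F(sI-A-BF)^{-1}B\hat{R}$, which is precisely the AWGN complementary sensitivity, and the bound above is the per-subchannel constraint (\ref{constrcap}) read along the columns rather than the rows of $\bm{T}(s)$.

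With this reduction in place, both directions follow the proof of Theorem \ref{maintheorem} almost verbatim. For necessity I would order the capacities non-increasingly and establish the partial-sum inequalities (\ref{maine}) for $j=1,\dots,k$, which are equivalent to (\ref{mr}); the total bound $\sum_{i=1}^l\mathfrak{C}_i>H(A)$ comes from summing the column constraints and invoking Lemma \ref{ocs}, while the remaining bounds come from a controllable--uncontrollable decomposition of $[A|BRM]$ with respect to its leading columns together with Remark \ref{remarksyscyc}. For sufficiency I would put $[A|B]$ in the cyclic form (\ref{cyclicdec}), design $F$ block-diagonally so that each cyclic block attains $\|\bm{T}_i\|_2^2=2H(A_i)$ via Lemma \ref{ocs}, and set $T=UD^{-1}$, $R=DU'M^{-1}$ (so that (\ref{fadingcons}) holds) with $U$ an isometry and $D=\mathrm{diag}\{1,\epsilon,\dots,\epsilon^{m-1}\}$. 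Using Lemma \ref{weaklymaj} to produce a vector $\gamma$ as in (\ref{ine2}) and (\ref{maj1}), and Lemma \ref{consd} to build $U$ that realizes $\gamma$ on the diagonal of $U\,\mathrm{diag}\{H(A_1),\dots,H(A_k),0,\dots,0\}U'$, the column energies of $\bm{T}(s)\Phi$ tend to $\gamma_j/\mathfrak{C}_j<1$ as $\epsilon\to0$, whence $\|\bm{T}(s)\Phi\|_{2,1}<1$ and therefore $\|\bm{T}(s)\Phi\|_{\mathrm{MS}}<1$.

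I expect the only genuinely new step to be the reduction in the second paragraph, namely identifying $\Phi$ with the capacity weighting and showing that the diagonal scaling hidden in the mean-square norm is supplied for free by the coding freedom. Once the mean-square criterion has been replaced by per-subchannel $\mathcal{H}_2$ constraints, the cyclic-decomposition and majorization arguments transfer unchanged; the only bookkeeping subtlety is that the fading constraint naturally lives on the columns of $\bm{T}(s)$ (through the adjoint $\bm{T}^*\bm{T}$) rather than on its rows, which is harmless because both $\|\cdot\|_{\mathrm{MS}}$ and the topological entropy are invariant under transposition.
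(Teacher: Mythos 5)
Your sufficiency half is the paper's proof in all but cosmetic detail: the same block-diagonal $F$ built from Lemma~\ref{ocs}, the same $U$ and $D$ obtained from Lemmas~\ref{weaklymaj} and~\ref{consd}, and an encoder/decoder that differs from the paper's choice (\ref{coding2}) only in where the factor $M^{-1}$ sits (the paper splits it as $M^{-1/2}$ on each side; the two designs are related by exactly the free diagonal scaling you describe), ending with the same estimate that the squared $(2,1)$-norm equals $\max_j \gamma_j/\mathfrak{C}_j+o(\epsilon)<1$, which dominates $\|\bm{T}(s)\Phi\|_{\mathrm{MS}}^2$. Your necessity, however, genuinely departs from the paper. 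The paper never touches criterion (d) there: it invokes criterion (c) of Lemma~\ref{lemmamss}, pre- and post-multiplies (\ref{mrine}) by $X^{-1/2}$ and takes traces, the point being that the Hadamard-product term collapses exactly to $2\sum_i\mathfrak{C}_i$ because $\Sigma^2\odot(\cdot)$ extracts diagonals; this trace argument is then repeated on the uncontrollable subsystem for $j\geq 2$. Your alternative --- absorb the diagonal scaling hidden in the mean-square norm into $(T,R)$, which preserves (\ref{fadingcons}) since diagonal matrices commute, and thereby replace the mean-square condition by per-column $\mathcal{H}_2$ constraints $\sum_i\|\bm{T}_{ij}(s)\|_2^2<2\mathfrak{C}_j$ --- is a correct and clean reduction (the mixed-norm characterization of $\|\cdot\|_{\mathrm{MS}}$ gives both directions), and it makes the fading and AWGN problems formally parallel in a way the paper achieves only implicitly, and only in the sufficiency direction.

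The step where you then claim the AWGN necessity proof transfers ``almost verbatim,'' justified by invariance under transposition, has a genuine gap. The AWGN necessity argument is intrinsically row-based: it bounds the stationary signal powers $\bfE[q_i^2]$, i.e., weighted row sums of $\bm{T}(s)$, and discards cross terms using the independence of $y_1$ and $\tilde{d}$; column sums admit no such signal-power reading. Transposition does not rescue the claim: $\bm{T}(s)'=\hat{R}'B'(sI-A'-F'B')^{-1}F'\hat{T}'$ is the complementary sensitivity of the pair $[A'|F']$ with gain $B'$ and coding pair $(\hat{R}',\hat{T}')$, i.e., the fixed matrix $B$ and the designed matrix $F$ trade places, so applying Theorem~\ref{maintheorem} to the transpose would yield a majorization bound involving the cyclic subsystems of $[A'|F']$ --- which depend on the design variable $F$ --- rather than (\ref{mr}). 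The gap is fixable with tools already in the paper. After the controllable--uncontrollable decomposition of $[A|BRM]$, write $\mathcal{A}_{cl}=\tilde{A}+\tilde{B}\tilde{F}$ for the closed-loop matrix in $y$-coordinates, let $C$ be the last $l-1$ rows of $\tilde{F}=TFP$ and $\mathcal{B}$ the last $l-1$ columns of $\tilde{B}=P^{-1}BRM$; then the trailing $(l-1)\times(l-1)$ block of $\bm{T}(s)$ is $C(sI-\mathcal{A}_{cl})^{-1}\mathcal{B}$, and since the first column of $\tilde{B}$ has no component in the $y_2$ block, $\mathcal{A}_{cl}-\mathcal{B}C$ is block upper-triangular with $\tilde{A}_{22}$ as a diagonal block. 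Hence that block is itself the complementary sensitivity of the pair $[\mathcal{A}_{cl}-\mathcal{B}C\,|\,\mathcal{B}]$ with stabilizing gain $C$, and Lemma~\ref{ocs} bounds its squared $\mathcal{H}_2$ norm below by $2H(\mathcal{A}_{cl}-\mathcal{B}C)\geq 2H(\tilde{A}_{22})$; the column constraints for $j\geq 2$ dominate the column sums of this block, so (\ref{maine}) follows via Remark~\ref{remarksyscyc}. Your necessity therefore closes, but only after replacing the verbatim/transposition claim by an argument of this kind.
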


\begin{proof}
As before, assume that all the eigenvalues of $A$ lie in the open right half complex plane for brevity.

We first show the necessity. Suppose $[A|B]$ is mean-square stabilizable, we shall show the relation (\ref{mr}) holds. Without loss of generality, assume that $\mathfrak{C}_1\geq \mathfrak{C}_2\geq \dots\geq \mathfrak{C}_l$. It suffices to show the inequality (\ref{maine}) holds for $j=1,2,\dots,k$. Let us start from the case when $j=1$. According to Lemma \ref{lemmamss}, there exists an encoder/decoder pair such that the matrix inequality (\ref{mrine}) has a solution $X>0$. Pre-multiplying and post-multiplying $X^{-\frac{1}{2}}$ on both sides of (\ref{mrine}) yields
\begin{multline}
X^{-\frac{1}{2}}A'X^{\frac{1}{2}}+X^{\frac{1}{2}}AX^{-\frac{1}{2}}\\
-X^{\frac{1}{2}}BRM[\Sigma^2\!\odot\!(R'B'XBR)]^{-1}MR'B'X^{\frac{1}{2}}<0\nonumber.
\end{multline}
Taking trace for both sides of the above inequality yields
\begin{align*}
&\mathrm{tr}(X^{-\frac{1}{2}}A'X^{\frac{1}{2}})+\mathrm{tr}(X^{\frac{1}{2}}AX^{-\frac{1}{2}})\\
&\hspace{20pt}-\mathrm{tr}\big\{X^{\frac{1}{2}}BRM[\Sigma^2\!\odot\!(R'B'XBR)]^{-1}MR'B'X^{\frac{1}{2}}\big\}\\
=&2H(A)-2\sum_{i=1}^l\mathfrak{C}_i<0,
\end{align*}
that validates the inequality (\ref{maine}) when $j=1$. We now proceed to the case when $j=2$. Let us carry out the controllable-uncontrollable decomposition to the linear system $[A|B\!R]$ with respect to the first column of $B\!R$, i.e., find a state space transformation $x(t)=Py(t)$ such that the system $[A|B\!R]$ is transformed to
\begin{align*}
\begin{bmatrix}\dot{y}_1(t)\\ \dot{y}_2(t)\end{bmatrix}=\begin{bmatrix}\tilde{A}_{11}&\tilde{A}_{12}\\0&\tilde{A}_{22}\end{bmatrix}\begin{bmatrix}y_1(t)\\ y_2(t)\end{bmatrix}+\begin{bmatrix}\tilde{B}_{11}&\tilde{B}_{12}\\0&\tilde{B}_{22}\end{bmatrix}\begin{bmatrix}p_1(t)\\ \tilde{p}(t)\end{bmatrix},
\end{align*}
where
\begin{align*}
p_1(t)&=\kappa_1(t)q_1(t),\\
\tilde{p}(t)&=\tilde{\kappa}(t)\tilde{q}(t),\\
\tilde{q}(t)&=\begin{bmatrix}q_2(t)&q_3(t)&\cdots&q_l(t)\end{bmatrix}',\\
\tilde{\kappa}(t)&=\begin{bmatrix}\kappa_2(t)&\kappa_3(t)&\cdots&\kappa_l(t)\end{bmatrix}'.
\end{align*}
Set $\tilde{F}\!=\!TFP$ and partition $\tilde{F}$ compatibly as $\tilde{F}\!=\!\begin{bmatrix}\tilde{F}_{11}&\tilde{F}_{12}\\ \tilde{F}_{21}&\tilde{F}_{22}\end{bmatrix}$.
Consider the subsystem $[\tilde{A}_{22}|\tilde{B}_{22}]$ with closed-loop dynamics
\begin{align*}
\dot{y}_2(t)=[\tilde{A}_{22}+\tilde{B}_{22}\tilde{\kappa}(t)\tilde{F}_{22}]y_2(t)+\tilde{B}_{22}\tilde{\kappa}(t)\tilde{F}_{21}y_1(t),
\end{align*}
where the term $\tilde{F}_{21}y_1(t)$ can be treated as a noise, the power of which goes to zero as time goes to infinity.
Then, by mimicking the above proof in the case when $j=1$, one can easily validate the inequality (\ref{maine}) when $j=2$. Moreover, in the same spirit, one can also validate the inequality (\ref{maine}) when $j=3,\dots,k$, which completes the necessity proof.

For the sufficiency, we will seek a state feedback gain $F$ together with an encoder matrix $T$ and a decoder matrix $R$ such that $\|\bm{T}(s)\Phi\|_{\mathrm{MS}}<1$, where $\bm{T}(s)$ is given by (\ref{cs2}) and $\Phi=M^{-1}\Sigma$. Without loss of generality, assume that $[A|B]$ is in the cyclic decomposition form (\ref{cyclicdec}), where each cyclic subsystem $[A_i|b_i],i=1,2,\dots,k$, is stabilizable with state dimension $n_i$. For each $[A_i|b_i]$, we can design a stabilizing state feedback gain $f_i$ such that $\|\bm{T}_i(s)\|_2^2=2H(A_i)$, where $\bm{T}_i(s)$ is given by (\ref{Ti}).
The existence of such $f_i$ is guaranteed by Lemma \ref{ocs}.
Let
$f=\mathrm{diag}\{f_1,f_2,\dots,f_k\}$
and design $F$ as in (\ref{F}).
Regarding the encoder/decoder pair, let
\begin{align}
T=M^{-\frac{1}{2}}UD^{-1}, \text{ and } R=DU'M^{-\frac{1}{2}},\label{coding2}
\end{align}
where $U\in\mathbb{R}^{l\times m}$ is an isometry to be designed and $D\!=\!\mathrm{diag}\{1,\epsilon,\dots,\epsilon^{m-1}\}$ with $\epsilon$ being a small positive number.
Also set $S=\mathrm{diag}\{I_{n_1},\epsilon I_{n_2}, \dots, \epsilon^{k-1} I_{n_k}\}$. Then
\begin{align*}
\bm{T}(s)&=TF(sI-A-BF)^{-1}BRM\\
&=M^{-\frac{1}{2}}U\bar{F}(sI-\bar{A}-\bar{B}\bar{F})^{-1}\bar{B}U'M^{\frac{1}{2}},
\end{align*}
where $\bar{F}$, $\bar{A}$, and $\bar{B}$ are as in (\ref{barF}), (\ref{barA}), and (\ref{barB}), respectively.
It follows that
\begin{align}
&\|M^{\frac{1}{2}}\bm{T}(s)\Phi M^{-\frac{1}{2}}\|_{2,1}^2\nonumber\\
&\!=\!\max_{1\leq j\leq l}\frac{1}{2\pi}\!\!\int_{-\infty}^{\infty}\!\!\left\{\Phi M^{-\frac{1}{2}} \bm{T}^*(j\omega)M\bm{T}(j\omega)M^{-\frac{1}{2}}\Phi\right\}_{jj}\!\!d\omega\nonumber\\
&\!=\!\max_{1\leq j\leq l}\!\left\{\!\Phi U(\mathrm{diag}\{\|\bm{T}_1(s)\|_2^2,\dots,\|\bm{T}_k(s)\|_2^2,0,\dots,0\}\!\right.\nonumber\\
&\hspace{165pt}\left.+o(\epsilon))U' \Phi\right\}_{jj}\nonumber\\
&\!=\!\max_{1\leq j\leq l} \!\left\{\Phi U(\mathrm{diag}\{2H(A_1),\dots,2H(A_k),0,\dots,0\}\right.\nonumber\\
&\hspace{158pt}\left.+o(\epsilon))U' \Phi\right\}_{jj}.\label{eq11}
\end{align}
When the relation (\ref{mr}) holds, by Lemma \ref{weaklymaj}, there exists a vector $\begin{bmatrix}\gamma_1&\gamma_2&\dots&\gamma_l\end{bmatrix}'$ such that the inequalities (\ref{ine2}) and (\ref{maj1}) are satisfied.
Further, by Lemma \ref{consd}, an isometry $U$ can be constructed such that the equality (\ref{eq2}) holds
for $i=1,2,\dots,l$.
Now, substituting (\ref{eq2}) into (\ref{eq11}) yields
\begin{align*}
\|M^{\frac{1}{2}}\bm{T}(s)\Phi M^{-\frac{1}{2}}\|_{2,1}^2=\max_{1\leq j\leq l}\frac{\gamma_j}{\mathfrak{C}_j}+o(\epsilon).
\end{align*}
It is clear that when $\epsilon$ is sufficiently small, we have
\begin{align*}
\|\bm{T}(s)\Phi\|_{\mathrm{MS}}\leq \|M^{\frac{1}{2}}\bm{T}(s)\Phi M^{-\frac{1}{2}}\|_{2,1}<1,
\end{align*}
which completes the proof.
\end{proof}

\begin{remark}
\label{remarkfading}
From the above proof, one can observe that the controller design is in fact the same as that in the previous case with AWGN subchannels. Specifically, the state feedback gain $F$ is designed as in (\ref{F}) by solving $\mathcal{H}_2$ optimal complementary sensitivity for each cyclic subsystem $[A_i|b_i], i=1,2,\dots,k$. As for the design of the encoder/decoder pair, it looks slightly different from the previous case due to the current constraint (\ref{fadingcons}) as opposed to (\ref{AWGNcons}). However, the essential idea is the same, i.e., reshaping the demands for the communication resource to match the given supplies.
\end{remark}

%\begin{remark}
%Since the majorization type condition is exactly the same as that in the previous section, the minimum number of subchannels needed for stabilization in the current case is also equal to the number of unstable cyclic subsystems $[A_i|b_i]$. In other words, it might be possible to accomplish the mean-square stabilization with less number of subchannels than the number of control inputs.
%\end{remark}

Notice that all the discussions following Theorem \ref{maintheorem} are also applicable here. Moreover, from Theorem \ref{fadingmr}, one can deduce two corollaries in parallel to Corollary \ref{cor2} and Corollary \ref{cor3}. The proofs are omitted for brevity.

\begin{corollary}
If the cyclic decomposition of $A$ has only one unstable cyclic block, i.e., $A_1$, then $[A|B]$ is MS stabilizable via MIMO communication over fading subchannels, if and only if $\mathfrak{C}>H(A)$.
\end{corollary}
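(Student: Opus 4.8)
The plan is to let Theorem \ref{fadingmr} do all the heavy lifting: it asserts that mean-square stabilizability via MIMO communication over fading subchannels is equivalent to the \emph{same} strictly weak majorization condition (\ref{mr}) that governed the AWGN case. Consequently this corollary is an exact replica of Corollary \ref{cor2}, and its proof collapses to a purely combinatorial reading of the majorization relation. First I would invoke Theorem \ref{fadingmr} to replace ``MS stabilizable'' with ``(\ref{mr}) holds,'' after which no control theory remains to be done.

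Next I would specialize the right-hand vector of (\ref{mr}) using the single-unstable-block hypothesis. By Remark \ref{remarkmatcyc} the entropies satisfy $H(A_1)\geq H(A_2)\geq\cdots\geq H(A_k)$; since only $A_1$ is unstable, the remaining cyclic blocks are stable, so $H(A_2)=\cdots=H(A_k)=0$ while $H(A_1)=H(A)$. Hence the condition (\ref{mr}) reduces to
\begin{align*}
\begin{bmatrix}\mathfrak{C}_1&\mathfrak{C}_2&\cdots&\mathfrak{C}_l\end{bmatrix}'\prec^w\begin{bmatrix}H(A)&0&\cdots&0\end{bmatrix}'.
\end{align*}
It then remains to verify that this strictly weak majorization from above holds precisely when $\mathfrak{C}>H(A)$.

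For that last step I would work directly with the non-decreasing rearrangement form (\ref{top}) of $\prec^w$. Sorting the right-hand vector in non-decreasing order gives $\begin{bmatrix}0&\cdots&0&H(A)\end{bmatrix}'$, so all but the final of its partial sums vanish. The defining chain of strict inequalities therefore demands that each partial sum of the $j$ smallest capacities be strictly positive for $j=1,\dots,l-1$, together with the terminal inequality $\sum_{i=1}^l\mathfrak{C}_i>H(A)$, i.e.\ $\mathfrak{C}>H(A)$. Because the fading-subchannel capacities $\mathfrak{C}_i=\tfrac{1}{2}\mu_i^2/\sigma_i^2$ are strictly positive, the leading $l-1$ inequalities are automatic, and the whole chain is equivalent to the single condition $\mathfrak{C}>H(A)$.

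I do not anticipate any genuine obstacle: the substantive content has already been absorbed into Theorem \ref{fadingmr}, and what survives is the identical elementary bookkeeping used for Corollary \ref{cor2}. The only point that warrants a moment's attention is checking that the first $l-1$ partial-sum inequalities are indeed vacuous, which rests on the (physically evident) positivity of the subchannel capacities $\mathfrak{C}_i$; if one wished to admit degenerate subchannels with $\mathfrak{C}_i=0$, that positivity assumption would need to be stated explicitly.
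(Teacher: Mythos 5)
Your proposal is correct and matches the paper's intended argument: the paper omits this proof as being ``in parallel to Corollary~\ref{cor2},'' i.e., invoke Theorem~\ref{fadingmr} to replace mean-square stabilizability by condition (\ref{mr}), then reduce (\ref{mr}) to $\mathfrak{C}>H(A)$ exactly as in the proof of Corollary~\ref{cor2}. Your explicit verification of the partial-sum inequalities in (\ref{top}), including the observation that the first $l-1$ of them rest on strict positivity of the capacities $\mathfrak{C}_i$, is slightly more careful than the paper's one-line reduction, but it is the same approach.
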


\begin{corollary}
If $\mathfrak{C}_1=\mathfrak{C}_2=\dots=\mathfrak{C}_l$, then $[A|B]$ is mean-square stabilizable via MIMO communication over fading subchannels, if and only if $\mathfrak{C}>H(A)$.
\end{corollary}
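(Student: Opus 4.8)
The plan is to reduce the statement entirely to Theorem \ref{fadingmr} and then analyze the resulting majorization condition by hand, exactly paralleling the proof of Corollary \ref{cor3} for the AWGN case. First I would invoke Theorem \ref{fadingmr}, which asserts that $[A|B]$ is mean-square stabilizable via MIMO communication over fading subchannels if and only if the strictly weak majorization (\ref{mr}) holds. Substituting the hypothesis $\mathfrak{C}_1=\mathfrak{C}_2=\dots=\mathfrak{C}_l=\mathfrak{C}/l$, the left-hand vector in (\ref{mr}) collapses to the uniform vector, so the entire claim reduces to showing that
\[
\frac{1}{l}\begin{bmatrix}\mathfrak{C}&\cdots&\mathfrak{C}\end{bmatrix}'\prec^w\begin{bmatrix}H(A_1)&\cdots&H(A_k)&0&\cdots&0\end{bmatrix}'
\]
holds if and only if $\mathfrak{C}>H(A)$. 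Note that no fresh construction of $F$, $T$, or $R$ is required, since all of the coding/control co-design has already been carried out inside Theorem \ref{fadingmr}; what remains is a purely scalar majorization check.

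For this equivalence I would work with the non-decreasing ($\uparrow$) characterization (\ref{top}) of $\prec^w$. Since every entry of the left vector equals $\mathfrak{C}/l$, its $j$th partial sum is simply $j\mathfrak{C}/l$. For the right vector, whose entries sum to $\sum_{i=1}^k H(A_i)=H(A)$, the average of its $j$ smallest entries cannot exceed the overall average $H(A)/l$, so its $j$th partial sum is at most $jH(A)/l$. Hence, assuming $\mathfrak{C}>H(A)$, for every $j=1,\dots,l$ I get $j\mathfrak{C}/l>jH(A)/l\geq$ ($j$th partial sum of the right vector), which is precisely the full chain of strict inequalities defining $\prec^w$, including the total-sum inequality at $j=l$. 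Conversely, that very total-sum inequality forces $\mathfrak{C}>H(A)$, so necessity comes for free, and the equivalence is established.

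The step requiring the most care is verifying that the whole family of strict partial-sum inequalities collapses to the single scalar condition $\mathfrak{C}>H(A)$. The crux is the observation that the uniform vector is the most even vector with a prescribed total, so the intermediate inequalities (for $l-k<j<l$) are automatically subsumed by the monotone-average bound $\sum_{i=1}^{j}y^\uparrow_i\leq jH(A)/l$ on the right-hand vector $y$. Once this bound is in place, the remaining bookkeeping is routine: both vectors have common length $l$, and the instability of the plant gives $\mathfrak{C}>H(A)>0$, so the cases $j\leq l-k$ reduce to the trivially strict inequality $j\mathfrak{C}/l>0$. This is the fading analog of the reduction used for Corollary \ref{cor3}, and I would close by remarking, as the discussion following Theorem \ref{fadingmr} already indicates, that the argument is identical in spirit to the AWGN case, differing only in that mean-square stabilizability is certified through condition (d) of Lemma \ref{lemmamss} rather than through the power constraints.
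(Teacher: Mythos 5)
Your proposal is correct and follows essentially the same route as the paper: the paper likewise deduces this corollary by invoking Theorem \ref{fadingmr} and reducing condition (\ref{mr}) to the uniform-vector weak majorization, exactly as in its proof of Corollary \ref{cor3} for the AWGN case (the paper omits the details, stating only that the reduced condition holds if and only if $\mathfrak{C}>H(A)$). Your partial-sum verification via the monotone-average bound simply makes explicit the scalar equivalence that the paper asserts without proof.
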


Having had all these studies in this section and the previous section, we are now in a position to return to our motivating questions raised in the very beginning. When MIMO communication is used in MIMO control, the encoder/decoder pair in the MIMO transceiver emerges as a substituted design freedom in place of the channel resource allocation. The deployment of MIMO communication brings in many new flexibilities and advantages as summarized below.
\begin{itemize}
\item[1)] The condition on the subchannel capacities for stabilizability is weakened to a great extent. The condition is now given in terms of a strictly weak majorization relation as in (\ref{mr}), which is much weaker than that obtained under the channel resource allocation as in \cite{Qiu,Chenb,Xiaob,Xiaoa}. The understanding obtained in those studies is that not only the total channel capacity should be greater than the topological entropy of the plant, but also the capacities of the individual channels should be greater than certain values.
    %each single channel should be greater than the topological entropy of a single-input subsystem yielded from the so-called Wonham decomposition.
\item[2)] One essential feature of MIMO communication is that the number of subchannels is often greater than the number of data streams. This has been deployed in wireless communication to increase the data transmission capacity. When applied to networked stabilization, such redundancy in the number of subchannels has the advantage of reducing the capacity requirement in the individual subchannels. For example, with MIMO communication, one can use plenty of subchannels with small capacities to stabilize a single-input system as long as the total capacity is greater than the topological entropy. Note that the idea of introducing redundancy can be found in many other areas as well such as error-correcting codes \cite{Peterson}, over-sampled filter banks \cite{Chai,Cvetkovic}, signal compression \cite{Mallet}, etc.
\item[3)] By virtue of the coding mechanism, in some cases, it may even be possible to stabilize the NCS with less number of subchannels than the number of the control inputs. The minimum number of subchannels needed is equal to the number of unstable cyclic subsystems yielded from the cyclic decomposition.
\end{itemize}

\section{Example}
In this section, a numerical example is provided to illustrate how to carry out the coding/control co-design to stabilize an NCS with MIMO communication.

Consider the following unstable system $[A|B]$:
\begin{align*}
A=\begin{bmatrix}4&0&0&0\\0&2&0&0\\0&0&1&0\\0&0&0&1\end{bmatrix},\quad B=\begin{bmatrix}1&1\\1&1\\1&1\\0&1\end{bmatrix},
\end{align*}
with initial condition $x_0=\begin{bmatrix}1 &1&1&1\end{bmatrix}'$. Clearly, $[A|B]$ is stabilizable. Moreover, it is already in the cyclic decomposition form (\ref{cyclicdec}) with cyclic subsystems
\begin{align*}
[A_1|b_1]=\left[\left.\begin{bmatrix}4&0&0\\0&2&0\\0&0&1\end{bmatrix}\right| \begin{bmatrix}1\\1\\1\end{bmatrix}\right],\text{ and } [A_2|b_2]=[1|1].
\end{align*}
It follows that
$H(A_1)=4+2+1=7$, and $H(A_2)=1$.

We first consider the case when the MIMO transceiver has three AWGN SISO subchannels. For simplicity, let the noise power spectral density be $N=I$. The admissible transmission power levels in the subchannels are given by
\begin{align*}
P_1=9.1, \text{ } P_2=3.1, \text{ and } P_3=4.1.
\end{align*}
%$P_1=9.1$, $P_2=3.1$, and $P_3=4.1$.
In view of (\ref{c1}), the subchannel capacities are
\begin{align*}
\mathfrak{C}_1=4.55, \text{ } \mathfrak{C}_2=1.55, \text{ and }\mathfrak{C}_3=2.05.
\end{align*}
Now, we can verify that the strictly weak majorization relation (\ref{mr}) is satisfied and, thus, the networked stabilization can be accomplished via certain coding/control co-design. One such co-design is carried out as below.

For the controller design, we solve the $\mathcal{H}_2$ optimal $\bm{T}_i(s)$ as in (\ref{Ti}) for each cyclic subsystem $[A_i|b_i],i=1,2$, yielding the optimal feedback gains $f_1=\begin{bmatrix}-40&36&-10\end{bmatrix}$ and $f_2=-2$, respectively. Let
\begin{align}
F=\mathrm{diag}\{f_1,f_2\}=\begin{bmatrix}-40&36&-10&0\\0&0&0&-2\end{bmatrix}.\label{Fsimu}
\end{align}
As for the coding design, let the encoder matrix $T$ and the decoder matrix $R$ be as in (\ref{coding}) with
\begin{align*}
U=\begin{bmatrix}0.7817&0.4714\\0.4629 &0\\-0.4179& 0.8819\end{bmatrix} \text{ and }D=\begin{bmatrix}1&0\\0&0.1\end{bmatrix}.
\end{align*}

With this co-design, we observe that the closed-loop poles are exactly the mirror images of the open-loop poles with respect to the imaginary axis. This validates the stability of the closed-loop system. Moreover, further computation yields
\begin{align*}
\bfE[q_1^2]&=9.0848<P_1,\\
\bfE[q_2^2]&=3.0299<P_2,\\
\bfE[q_3^2]&=4.0249<P_3,
\end{align*}
i.e., the input power constraints (\ref{powercons}) are satisfied. Combining these two observations, we see that the networked stabilization is accomplished via this coding/control co-design.

We proceed to consider the case when the transceiver has three fading SISO subchannels. The mean and covariance of the multiplicative noise $\kappa(t)$ are given by
\begin{align*}
M=\begin{bmatrix}2 &0&0\\0&0.6&0\\0&0&0.9\end{bmatrix}, \;\;\;\Sigma^2=\begin{bmatrix}0.35 &0 &0\\0&0.2&0\\0&0&0.25\end{bmatrix},
\end{align*}
respectively. In view of (\ref{capfading}), the subchannel capacities are
\begin{align*}
\mathfrak{C}_1=5.7143, \text{ } \mathfrak{C}_2=0.9, \text{ and }\mathfrak{C}_3=1.62.
\end{align*}
Once again, we can verify that the strictly weak majorization relation (\ref{mr}) is satisfied and, thus, the mean-square stabilization can be accomplished via certain coding/control co-design. One such co-design is carried out as below.

For the controller design, as in Remark \ref{remarkfading}, we use the same state feedback gain as in (\ref{Fsimu}). For the coding design, let the encoder matrix $T$ and the decoder matrix $R$ be as in (\ref{coding2}) with
\begin{align*}
U=\begin{bmatrix}0.8952&-0.1993\\0&0.8944\\0.4456&0.4004\end{bmatrix} \text{ and }D=\begin{bmatrix}1&0\\0&0.1\end{bmatrix}.
\end{align*}

Under this coding/control co-design, the closed-loop system is mean-square stable. As shown in Fig. \ref{fadingsimu}, the Frobenius norm of the state covariance goes to zero asymptotically.

\begin{figure}[htbp]
\centering
\includegraphics[scale=0.42]{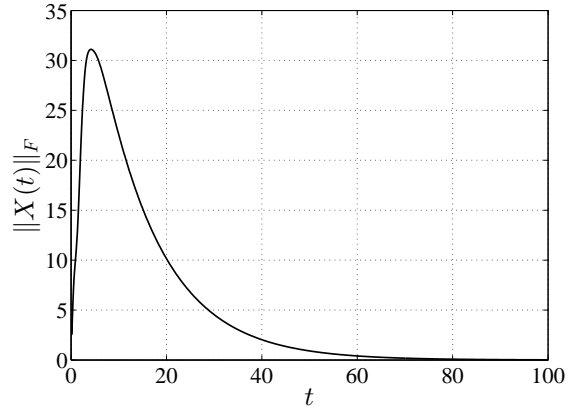}
\caption{Closed-loop evolution of $\|X(t)\|_F$.}
\label{fadingsimu}
\end{figure}

\section{Conclusion}
This paper initiates the study of MIMO control with MIMO communication. We consider the networked stabilization with the communication system between the controller and the plant modeled as a MIMO transceiver, which consists of three parts: an encoder, a MIMO channel, and a decoder. In the spirit of MIMO communication, the number of SISO subchannels in the transceiver is often greater than the number of data streams to be transmitted. The subchannel capacities are fixed a priori and, thus, not allocatable. Nevertheless, the encoder/decoder pair in the transceiver can be designed subject to a certain mild constraint. With this additional design freedom, one needs to design the controller and the encoder/decoder pair jointly so as to stabilize the system, leading to a stabilization problem via coding/control co-design.

The subchannels in the transceiver are modeled in two ways. We first consider the AWGN subchannels and then the fading subchannels.
In both cases, it has been shown that the resulting coding/control co-design problem is solvable, if and only if the majorization type condition (\ref{mr}) is satisfied. The condition (\ref{mr}) relates the subchannel capacities required for stabilization to the topological entropies of the cyclic subsystems of the plant via a majorization type relation. This, on the other hand, gives an application of majorization in control theory. When the relation (\ref{mr}) holds, systematic procedures have also been put forward to carry out the coding/control co-design.

The key for establishing the condition (\ref{mr}) is to observe that the coding mechanism has the effect of mixing the demands for communication resource from different control inputs. As remarked before,
in order to achieve closed-loop stabilization, the demand and supply for the communication resource must be balanced. When the subchannel capacities are fixed a priori, one can design the encoder/decoder pair wisely such that the demands are reshaped properly so as to match the supplies. This contrasts with the channel resource allocation as in \cite{Qiu,Chenb,Feng,Xiaob,Xiaoa} that does the exact opposite, i.e., tailoring the supplies to meet the demands.

Despite the above-mentioned difference, it is worth noting that both co-design approaches provide an additional design freedom on top of the controller design. One common message conveyed is the following: In networked control, it is often more advantageous to design the communication system and the controller jointly rather than separately. By exploiting the additional design freedom, easier problems can be formulated and better results can be obtained. To this extent, the two co-design approaches can be unified under a general framework called communication/control co-design.

As a starting point, this paper is dedicated to the continuous-time networked stabilization with MIMO communication. The discrete-time counterpart is currently under our investigation. In the future, we wish to discover more connections between communication theory and control theory. Also, more applications of majorization in control theory are to be explored.

\section*{Acknowledgment}

The authors wish to thank Professor Weizhou Su of South China University of Technology, Dr. Lidong He of Zhejiang University, and Mr. Wei Jiang of Hong Kong University of Science and Technology for helpful discussions.

\end{document}